\definecolor{codegreen}{rgb}{0,0.6,0}
\definecolor{codegray}{rgb}{0.5,0.5,0.5}
\definecolor{codepurple}{rgb}{0.58,0,0.82}
\definecolor{backcolour}{rgb}{0.95,0.95,0.92}
\definecolor{carrotorange}{rgb}{0.93, 0.57, 0.13}
\definecolor{burntorange}{rgb}{0.8, 0.33, 0.0}
\lstdefinestyle{mystyle}{
    backgroundcolor=\color{backcolour},   
    commentstyle=\color{codegreen},
    keywordstyle=\color{magenta},
    numberstyle=\tiny\color{codegray},
    stringstyle=\color{codepurple},
    basicstyle=\ttfamily\footnotesize,
    breakatwhitespace=false,         
    breaklines=true,                 
    captionpos=b,                    
    keepspaces=true,                 
    numbers=left,                    
    numbersep=5pt,                  
    showspaces=false,                
    showstringspaces=false,
    showtabs=false,                  
    tabsize=2
}
\newtheorem{theorem}{Theorem}[section]
\renewcommand\footnotetextcopyrightpermission[1]{} 
\begin{document}

\title{Search for Optimal Systolic Arrays: A Comprehensive Automated Exploration Framework and Lessons Learned}



\author{Jie Wang}
\affiliation{%
  \institution{University of California, Los Angeles}
}
\email{jiewang@cs.ucla.edu}

\author{Jason Cong}
\affiliation{\institution{University of California, Los Angeles}}
\email{cong@cs.ucla.edu}

\begin{abstract}

Systolic arrays have been widely used for accelerating HPC and deep learning applications. 
There is a plethora of previous works on the performance tuning of systolic arrays, but usually based on a number of oversimplified assumptions (e.g., only considering divisors for loop tiling, pruning based on off-chip data communication) to reduce the design space. 

In this paper, we present a comprehensive design space exploration tool named Odyssey for systolic array optimization.  
Odyssey does not rely on artificial assumptions to limit the design space, and yet it is highly efficient and scalable with a hybrid optimization technique.  For example, for a 1024×1024×1024 matrix multiplication, it finds designs that reach 90\% of the optimal performance in 5 seconds with a single CPU thread. 
Moreover, using Odyssey, we unveil and quantify the suboptimality introduced by multiple commonly used oversimplifications in prior studies for systolic array design space exploration.   
For example, Odyssey results show that limiting to divisors for loop tiling leads to a 39\% performance loss, and pruning based on off-chip data movement results in a 45\% performance loss. We applied Odyssey to explore the architecture trade-offs for matrix multiplication and convolutional neural network, providing inspiration into possible optimizations for these two applications.


\end{abstract}

\settopmatter{printfolios=true}
\maketitle

\section{Introduction}
\label{sec:intro}
Performance optimization for a given class of microarchitectures, also called performance tuning, has long been an important topic given the complexity of hardware systems and applications. 
This issue is intensified on domain-specific architectures (DSA), which grant designers explicit control over the software stack and hardware architecture, opening up a vast design space to explore. 

This paper focuses on the performance tuning of systolic arrays. 
A complete design space of systolic arrays contains multiple dimensions, such as the selection of \textit{dataflows}\footnote{Given a target loop program to map to systolic arrays, designers could select different loops to be mapped to spatial dimensions of systolic arrays, leading to different array topologies and execution models. We define each of such choices as a unique dataflow.}, loop permutation and tiling.
These factors impact the final design performance in an intertwined manner and compose a vast design space which is intractable to handle manually.


Many previous works have attempted this challenging task by looking into different dimensions of the design space and proposing various auto-tuning methods~\cite{timeloop,confuciux,cosa,dmazerunner,marvel,tenet,interstellar,gamma}. However, after a thorough examination of the previous works, we identified several limitations that need to be addressed.


\begin{figure}[t]
    \centering
    \begin{subfigure}{0.45\columnwidth}
        \centering
        \includegraphics[width=\linewidth]{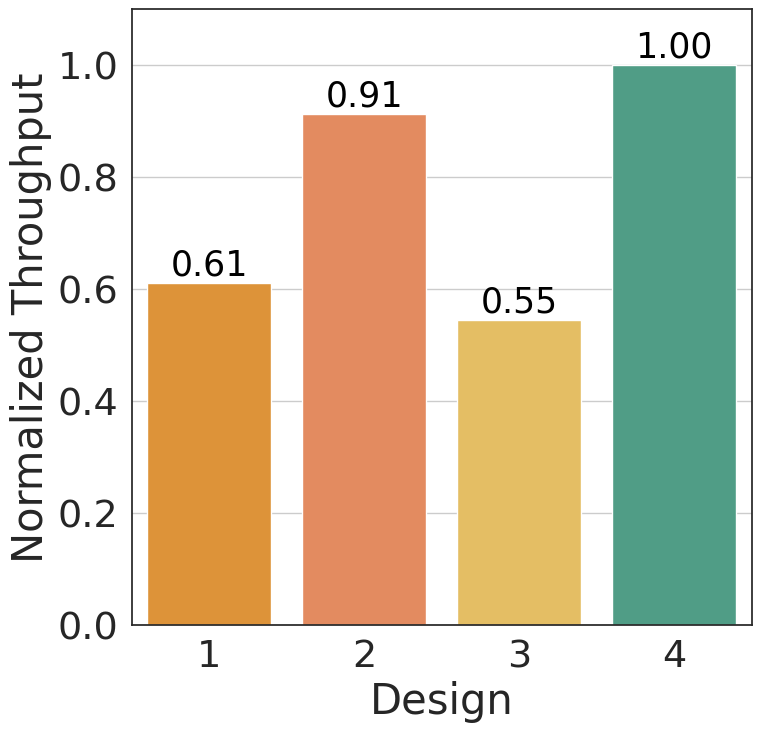}
        \caption{Throughput}
    \end{subfigure}
    \begin{subfigure}{0.45\columnwidth}
        \includegraphics[width=\linewidth]{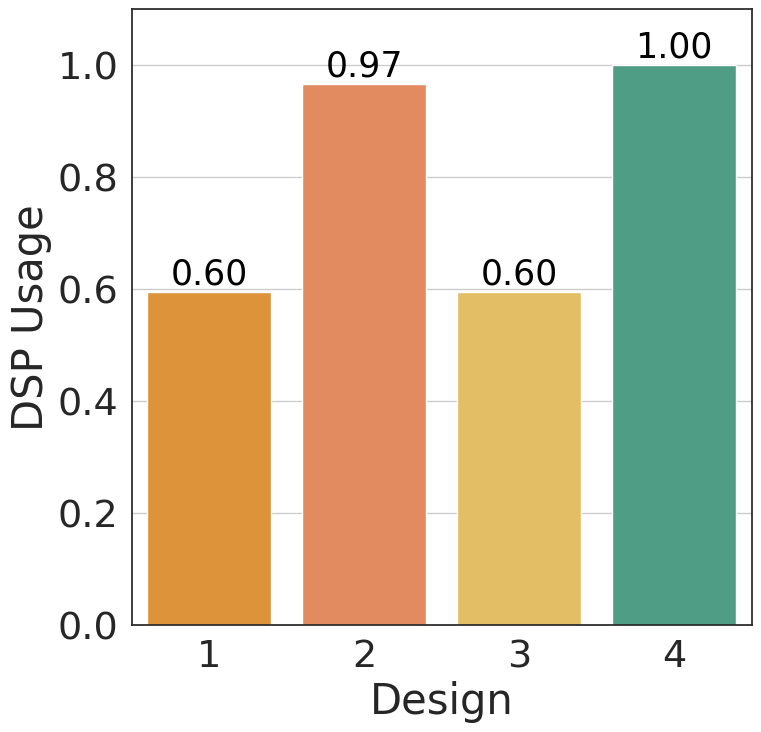}
        \caption{DSP}
        \label{fig:motive_dsp}
    \end{subfigure}
    \vspace{-0.05in}
    \caption{Throughput and DSP usage of different systolic array designs for a 1024x1024x1024 MM. Design 4 is produced by Odyssey.}
    \vspace{-0.05in}
    \label{fig:motiv_throughput}
\end{figure}

\textbf{Limitation 1: Incomplete coverage of the design space.} 
When selecting the tiling factors, many previous works only considered divisors to reduce the design space~\cite{dmazerunner,interstellar,marvel,cosa}. 
In Figure~\ref{fig:motiv_throughput}, we compare the throughput and DSP usage of best systolic arrays found when limiting tiling factors to 1) divisors only (Design 1) and 2) both divisors and non-divisors (Design 4) for a $1024\times 1024\times 1024$ matrix multiplication (MM)\footnote{Please refer to Section~\ref{sec:evo_details} for more details.}. Limiting to divisors leads to a 39\% performance loss. With the limited design space, the divisor-only design fails to fully exploit the on-chip resource, with only 60\% DSP usage.

\textbf{Limitation 2: Inaccurate performance modeling.} An inaccurate performance model could also hurt the quality of search results. For example, the previous work TENET~\cite{tenet} estimated the design latency as the maximum of compute and communication latency. This model overlooks the prologue/epilogue phases when loading the first tile of data and writing out the final results. Figure~\ref{fig:motiv_throughput} shows the performance of the best design found when using the maximum-based model (Design 2) for the same MM problem. We observe a 9\% performance loss compared to the optimal design.

\textbf{Limitation 3: Inefficient search methods and imperfect pruning heuristics.} 
When searching the design space, many previous works adopted pruning-based exhaustive search which may not scale to large-sized problems~\cite{dmazerunner,interstellar,marvel,tenet}. To make matters worse, several works chose imperfect pruning heuristics, failing to cover optimal designs. For example, the previous work Marvel~\cite{marvel} pruned the design space based on the off-chip data communication and applied an exhaustive search in the pruned sub space. However, an optimal design needs to balance both the data communication and computation and does not necessarily minimize the off-chip memory accesses. As a result, the best design found which minimizes the off-chip data communication (Design 3) in Figure~\ref{fig:motiv_throughput} introduces a 45\% performance loss compared to the optimal design.


All of these limitations affect the quality of search results and further impact the architecture decisions that designers derive based on these results. 
To overcome this challenge, in this paper, we propose a new automatic design space exploration framework for systolic arrays, \textit{Odyssey}
\footnote{Odyssey is abbreviated from \underline{AU}tomatic \underline{DE}sign space exploration for \underline{SY}stolic arrays.}. 
Figure~\ref{fig:tuning_flow} depicts the proposed tuning flow.

\begin{figure}[t]
\includegraphics[width=0.75\columnwidth]{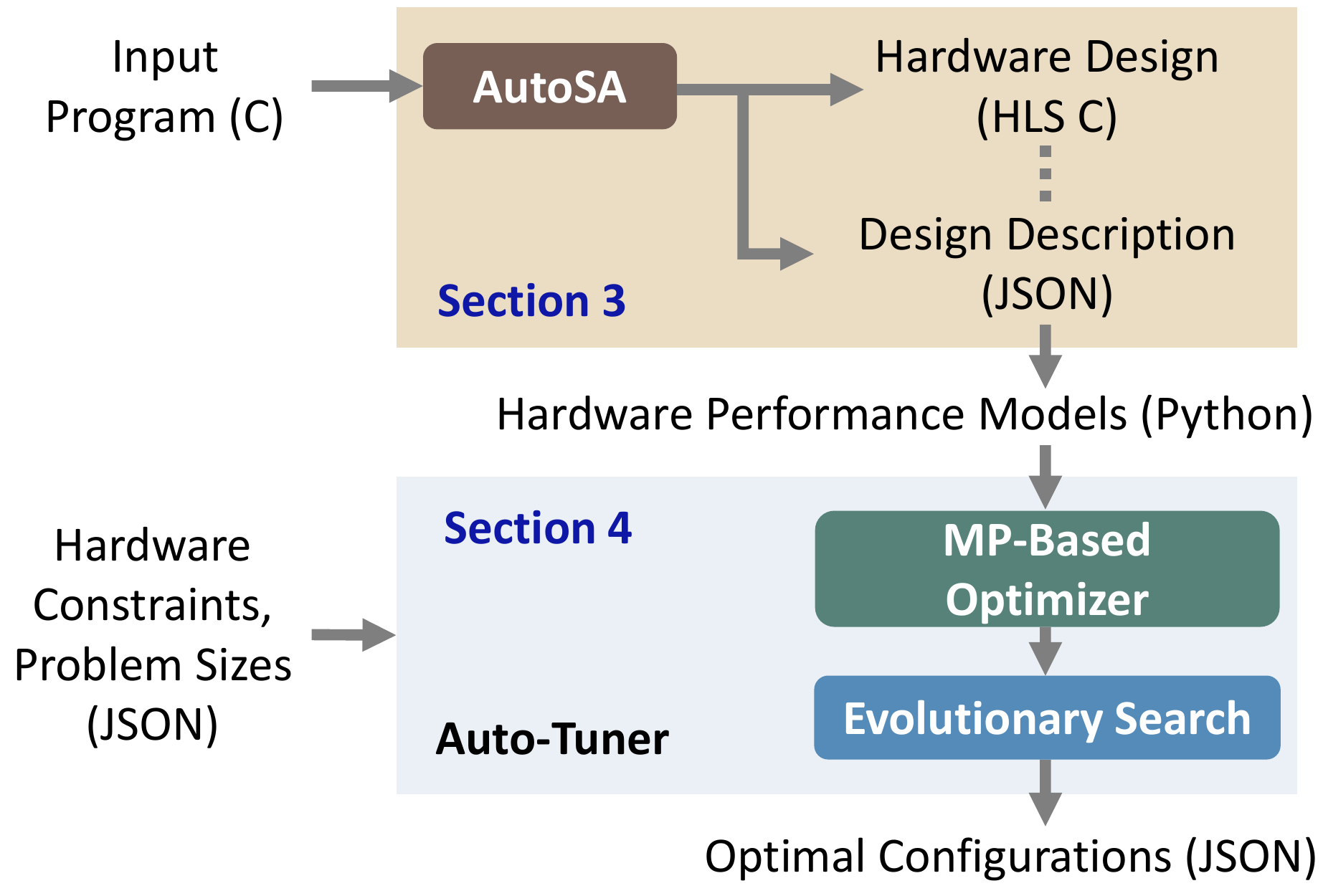}
\centering
\vspace{-0.05in}
\caption{Overview of Odyssey framework.}
\vspace{-0.05in}
\label{fig:tuning_flow}
\end{figure}

Odyssey leverages AutoSA~\cite{autosa}, an open-source FPGA-based systolic array compiler, to construct the design space automatically. 
AutoSA takes in a C program that describes the target algorithm to map to systolic arrays and generates the systolic array designs in Xilinx HLS C~\cite{xilinx_hls}.
We extend the AutoSA framework to generate a design description file that covers the full details of the generated hardware. 
Odyssey uses this file to create hardware performance models as symbolic expressions of the tuning parameters that can be used by the auto-tuner.
Inside the auto-tuner, Odyssey implements a two-stage flow that starts with a mathematical programming (MP)-based optimizer that leverages optimization solvers with a simplified objective function to produce an initial high-quality design, followed by the evolutionary search with the accurate performance models. Odyssey surpasses the previous works in multiple dimensions. 

\textbf{Contribution 1: A comprehensive design space and accurate performance modeling.} 
Odyssey covers a comprehensive design space including the selection of dataflows, loop permutation and tiling. Furthermore, Odyssey derives the performance models directly from the compiler-generated hardware with high estimation accuracy.
For example, we are able to produce latency models by traversing the ASTs of the generated HLS designs, which achieved a low estimation error of 1.99\% compared to the real hardware execution.

\textbf{Contribution 2: Efficient auto-tuning methods.} The proposed auto-tuning algorithm does not rely on any artificial assumptions (e.g., divisors tiling factors, pruning based on off-chip data communication) to limit the design space, and yet it is highly efficient with a hybrid optimization technique that combines the MP-based optimization and evolutionary search. 
We propose a hybrid mutation operation in evolutionary search that takes non-divisor tiling factors into consideration. In addition, we implement a pruning algorithm for loop permutation that trims away inferior designs without omitting the optimal designs.
As a result, our search framework can locate designs with high performance in a short amount of time, For example, for the $1024\times1024\times1024$ MM, the proposed auto-tuner finds a design that achieves 90\% of the optimal performance in 5 seconds with a single CPU thread.

\textbf{Contribution 3: A fully automated and open-source framework.} 
The entire flow is fully automated and open-sourced\footnote{\url{https://github.com/UCLA-VAST/AutoSA/tree/master/autosa_scripts/odyssey}}. Odyssey is the first work that is built directly on an open-source systolic array compiler to construct the design space and generate performance models based on real hardware implementations. This guarantees the comprehensiveness and accuracy of the design space modeling, and validity and reproducibility of the search results.
In this paper, we show two architecture studies on MM and convolutionary neural network (CNN). 


The rest of this paper is organized as follows: Section~\ref{sec:background} discusses the previous works and covers the basics of AutoSA.
Section~\ref{sec:construct} explains how we construct the design space. In Section~\ref{sec:scan}, we introduce the search methods to explore this design space. Section~\ref{sec:evaluate_results} presents the evaluation results. Section~\ref{sec:discuss} discusses lessons learned from this work and Section~\ref{sec:limit} states the limitations of this work.
Section~\ref{sec:conclude} concludes the paper.

\section{Background and Related Work}
\label{sec:background}

There is a plethora of previous works on performance tuning of systolic arrays~\cite{timeloop,confuciux,cosa,dmazerunner,marvel,tenet,interstellar,gamma,hegde2021mind}. Table~\ref{table:prior_work_tuning} lists several recent works. Note that some of the listed works covered a broader set of architectures beyond systolic arrays. In this work, we only focus on the systolic array architecture. However, the methodology proposed in this work can be applied to other architectures as well. We discuss the prior works from two dimensions: the design space and search methods.

\begin{table}[t]
\caption{Comparison between different systolic array architecture performance tuning frameworks.}
\resizebox{\columnwidth}{!}{
\begin{tabular}{lcccc}
\toprule
                                 & Design Space             & \multicolumn{2}{c}{Performance Models}                                                     & \multirow{2}{*}{Search Methods}                                                     \\ \cline{1-4}
                                 & Non-Divisors             & \begin{tabular}[c]{@{}c@{}}Prologue/\\ Epilogue\end{tabular} & Generation                  &                                                                                     \\ \toprule
Timeloop~\cite{timeloop}         & \textcolor{red}{N}       & \textcolor{red}{N}                                           & \textcolor{red}{Manual}     & \begin{tabular}[c]{@{}c@{}}Exhaustive w/ Pruning\\ Random Search\end{tabular}       \\ \hline
dMazeRunner~\cite{dmazerunner}   & \textcolor{red}{N}       & \textcolor{red}{N}                                           & \textcolor{red}{Manual}     & Exhaustive w/ Pruning                                                               \\ \hline
Interstellar~\cite{interstellar} & \textcolor{red}{N}       & N/A                                                          & \textcolor{red}{Manual}     & Exhaustive w/ Pruning                                                               \\ \hline
Marvel~\cite{marvel}             & \textcolor{red}{N}       & {\color{green!55!blue}Y}                                     & \textcolor{red}{Manual}     & \begin{tabular}[c]{@{}c@{}}Mathematical Programming\\ Exhaustive w/ Pruning\end{tabular} \\ \hline
ConfuciuX~\cite{confuciux}       & N/A                      & {\color{green!55!blue}Y}                                     & \textcolor{red}{Manual}     & \begin{tabular}[c]{@{}c@{}}RL\\ Evolutionary Search\end{tabular}                    \\ \hline
CoSA~\cite{cosa}                 & \textcolor{red}{N}       & {\color{green!55!blue}Y}                                     & \textcolor{red}{Manual}     & Mathematical Programming                                 \\ \hline
TENET~\cite{tenet}               & N/A                      & \textcolor{red}{N}                                           & \textcolor{red}{Manual}     & Exhaustive w/ Pruning                                                               \\ \hline
\textbf{Odyssey}                          & {\color{green!55!blue}Y} & {\color{green!55!blue}Y}                                     & {\color{green!55!blue}Auto} & \begin{tabular}[c]{@{}c@{}}Mathematical Programming\\ Evolutionary Search\end{tabular}   \\ \bottomrule
\end{tabular}
}
\label{table:prior_work_tuning}
\end{table}

\subsection{Design Space}
\paragraph{Design space coverage} We consider three dimensions of the design space: dataflows, loop permutation and loop tiling. In addition, in this work, we cover non-divisor tiling factors in loop tiling. Many previous works only used divisors for simplicity~\cite{timeloop,dmazerunner,interstellar,marvel,cosa}. However, this choice could lead to significant performance loss, as discussed in Section~\ref{sec:intro}.

\paragraph{Performance model} 
The accuracy of performance models plays an important role in performance tuning. In Section~\ref{sec:intro}, we discussed the issue of using a simplified performance model that overlooks the epilogue and prologue phases of the hardware execution. 
Table~\ref{table:prior_work_tuning} highlights several previous works with a similar issue~\cite{timeloop,dmazerunner,tenet}. In addition, such performance models are usually derived manually which is time-consuming and inaccurate. Odyssey distinguishes itself from the prior works in that it automatically creates performance models by leveraging the AutoSA compiler. 

\subsection{Search Methods}
\label{sec:prev_search_method}
We classify the search methods into three categories.

\paragraph{Brute-force methods} Such methods include random search~\cite{timeloop,autotvm} and exhaustive search with pruning~\cite{xuechao_sa,autosa,dmazerunner,interstellar,marvel,tenet}. Brute-force methods face scalibility issues with large-scale problems.
For pruning-based exhaustive search, imperfect pruning heuristics could lead to inferior designs. 
In Section~\ref{sec:intro}, we mentioned using off-chip data communication as the pruning heuristic leads to inferior performance. Previous works like~\cite{marvel,chen_communication} used this heuristic.

\paragraph{Mathematical programming-based methods} These methods formulate the search task as a mathematical optimization problem and resolve it with optimization solvers~\cite{analytical_cnn,analytical_tc,marvel,cosa}. 
The recent work CoSA~\cite{cosa} modeled the design space as a mixed integer programming (MIP) problem. To accommodate for the formulation requirements of MIP, CoSA set the objective function as a linear combination of several high-order factors such as resource utilization and data communication.
The simplified models employed in such approaches will lead to inferior designs. We conducted an experiment in which we model the tuning task for AutoSA-generated designs as a non-linear optimization problem and used the off-the-shelf solver (AMPL~\cite{ampl} with Ipopt~\cite{ipopt}) to generate the solutions. The best design obtained from this approach is 1.5$\times$ slower than the optimal design obtained by Odyssey (see Section~\ref{sec:solver_details}). 

\paragraph{Iterative methods} 
Examples include simulated annealing~\cite{autotvm}, evolutionary search~\cite{ansor}, Bayesian optimization~\cite{david_bayesian}, and reinforcement learning~\cite{confuciux}. In this work, we examined several iterative search methods and chose evolutionary search given its high sample efficiency and search quality.


An ideal performance tuning framework should achieve: 1) a \textit{comprehensive coverage} and \textit{accurate modeling} of the design space, and 2) \textit{efficient} search methods to explore the design space. The failure in either of the two targets will impact the quality of search results, as well as the architecture conclusions derived from these results. Unfortunately, regardless of the plethora of past studies, we observe no prior work that reached a balance between these two goals. This situation has motivated us to tackle this challenge. 

\subsection{Review of Automatic Systolic Array Generation}
\label{sec:systolic_generate}
Automatic systolic array generation is an important research topic given the high performance of the systolic array architecture and the complexities of the designing process~\cite{gemmini,mmalpha,uday_ppopp,calyx,susy}. The recent work, AutoSA~\cite{autosa}, reported the best performance results in this field.
AutoSA takes in a C program as the input and applies a sequence of program transformations on this program to build and optimize systolic arrays.

\begin{figure*}[t]
\includegraphics[width=\textwidth]{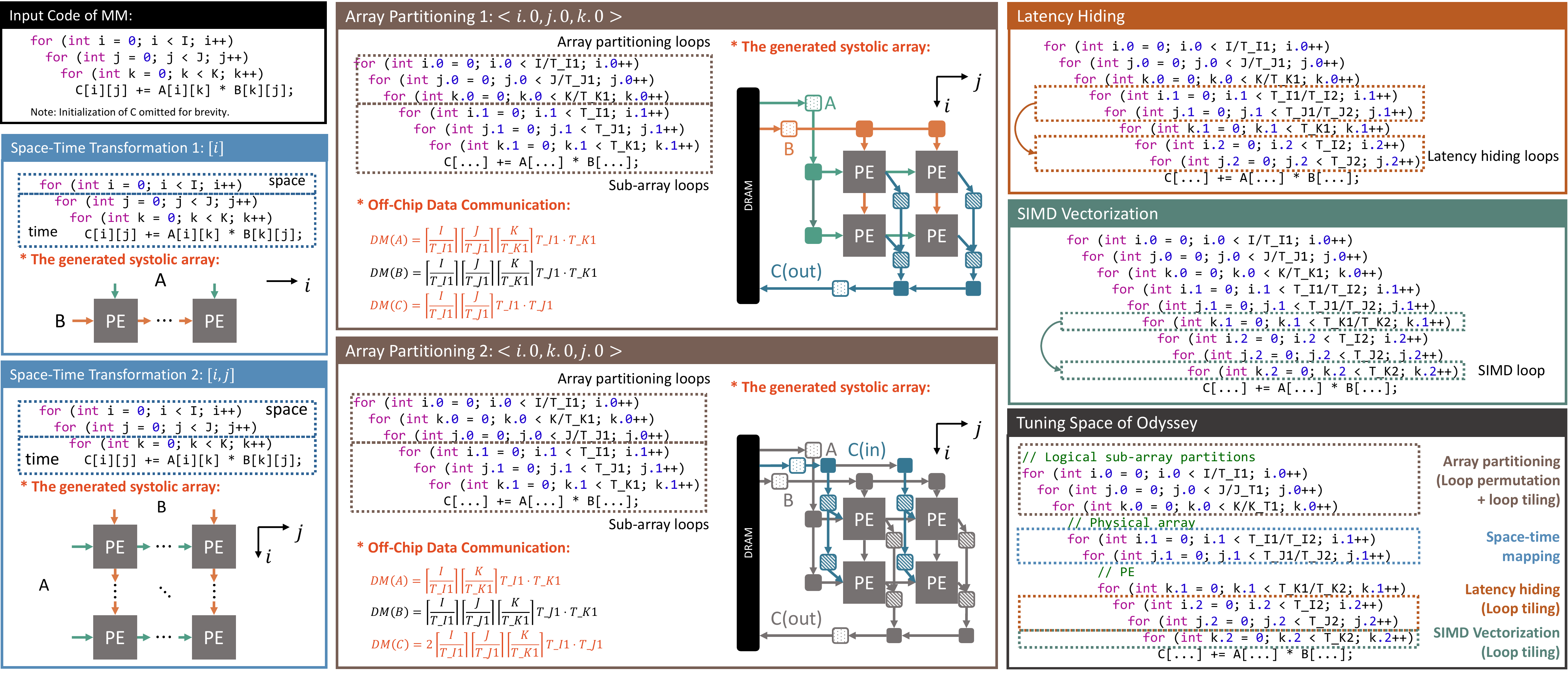}
\centering
\caption{Example of using AutoSA to generate systolic arrays for MM.}
\label{fig:autosa_flow}
\end{figure*}

The first step \textit{space-time mapping} assigns space and time semantics to different loops to create a systolic array. The left column in Figure~\ref{fig:autosa_flow} shows two examples for MM. The first example assigns loop $i$ as space loops, leading to a 1D systolic array with $I$ PEs. 
We assign all the loops below the space loops as time loops that describe the computation inside PEs.
The second example maps loops $i$ and $j$ to spatial dimensions, producing a 2D systolic array with a size of $I\times J$. 
AutoSA identifies all the legal space loops and generates different space-time mapping candidates
\footnote{Previous works~\cite{eyeriss,interstellar,maestro} used the term \textit{dataflow} to identify different array topologies and execution patterns, which are equivalent to different space-time mappings within the scope of systolic arrays. 
In the rest of the paper, we use \textit{dataflow} to refer to different space-time mappings. We annotate each dataflow in the format of $[i,j]$ that marks the selected space loops for this array.}.

The next step, \textit{array partitioning}, employs loop tiling on the outermost permutable loops to reduce the array size.
The middle column in Figure~\ref{fig:autosa_flow} presents two examples of MM in which the original loops $i$, $j$, $k$ are tiled with factors of $[2,2,2]$ that leads to a smaller array with a size of $2\times 2$. The original computation task is partitioned and executed on this array in sequence. The tiled loops, named as \textit{array partitioning loops}, can be further permuted which could lead to different array architectures. The first array keeps the original loop ordering $<i.0,j.0,k.0>$. Data of matrix $C$ are accumulated along the loop $k.0$. Consequently, intermediate results are updated on-chip and only written out off-chip on the last iteration of loop $k.0$. In comparison, the second array uses the loop ordering $<i.0,k.0,j.0>$. As different tiles of matrix $C$ are updated at each iteration of loop $j.0$, we add additional hardware modules in the on-chip I/O network (marked as $C(in)$) to load in the intermediate results. 
The third step, \textit{latency hiding}, tiles parallel loops and permutes them to the innermost to hide the computation latency. And the last step, \textit{SIMD vectorization}, vectorizes one loop to improve the compute-over-control ratio to amortize the control overheads inside PEs.



With a comprehensive coverage of hardware optimization techniques, AutoSA generates high-performance systolic arrays with comparable or better performance than previous works~\cite{autosa}. However, such a vast design also poses significant challenges to performance tuning. As an example, considering all the available tuning options, the size of design space bloats to $O(2^{40})$ for a $1024\times 1024\times 1024$ MM. 
This challenge has motivated us to develop Odyssey which provides efficient auto-tuning support to explore such a design space.

The subfigure at the bottom-right of Figure~\ref{fig:autosa_flow} summarizes the design space covered by Odyssey. Odyssey explores different dataflows in the space-time transformation and the loop tiling factors in array partitioning, latency hiding, and SIMD vectorization. Odyssey also explores non-divisor tiling factors and loop permutation in the step of array partitioning. When a non-divisor tiling factor is chosen, the original problem dimension is zero padded to be a multiple of the tiling factor. 
Tiling factors of latency hiding and SIMD vectorization are still required to be divisors as the array dimension is fixed after array partitioning and non-divisor tiling factors at later steps introduce prologue/epilogue phases and costly max/min operations that hurt the design performance. 


\section{Constructing the Design Space}
\label{sec:construct}
In this section, we discuss how we construct the design space. 
The current design space considers loop permutation. 
We can prune the loop permutation orderings without filtering out the optimal designs.
The second part of this section covers the details of the pruning heuristics.

\subsection{Enhancement to AutoSA}
We have extended AutoSA to generate a design descriptor that contains all the necessary information for estimating the design performance.
This descriptor contains the following components:
\begin{itemize}
    \item {\textit{Abstract syntax tree (AST)}}: ASTs of all the hardware modules for latency estimation.
    \item {\textit{Memory information}}: Properties of on-chip buffers including buffer size, data type, and memory banking.
    \item {\textit{Compute information}}: Information about the PE computation logic, such as the SIMD lanes and computation statements. 
    \item {\textit{Array Topology}}: Information about the topology of the systolic array, such as the array dimension and the number of different hardware modules. We use this information, along with the memory and compute information for resource estimation.
    \item {\textit{Tuning parameters}}: Loop tiling factors to be tuned.
\end{itemize}

The auto-tuner utilizes this description file to create a Python file containing functions for estimating the design performance. All the performance models are symbolic expressions of the tuning parameters.
During the search, the auto-tuner samples the design space and plugs in different tuning parameters into the performance models for assessing the design performance. 


%

\subsection{Loop Permutation Pruning}
\label{sec:loop_permute}
Odyssey explores different loop permutation orderings in the array partitioning. As shown in Section~\ref{sec:systolic_generate}, different loop orderings may lead to various array architectures. AutoSA enumerates all the loop permutation orderings. For MM, there are $3!=6$ different orderings to consider. The number grows larger for a more complicated application like CNN. 

\lstinputlisting[language=c, caption=Example code of a CNN layer. Batch size and stride are set to 1 here for simplicity.,label=lst:cnncode,basicstyle=\tiny]{figs/cnn_code.c}

Listing~\ref{lst:cnncode} shows the example code of one CNN layer. The six-level nested loop leads to $6!=720$ different loop orderings. However, as pointed out by previous works~\cite{analytical_cnn,analytical_tc,dmazerunner}, among all the loop orderings, many of them are dominated by a few orderings in performance, thus can be safely pruned without leaving out the optimal points. Next, we show that with proper pruning, we can reduce the number of loop orderings to consider for both MM and CNN to only 3. 


We consider resource usage and latency when assessing the design performance. Different loop orderings will impact the structure of the I/O network, resulting in different resource usage. In Figure~\ref{fig:autosa_flow}, we showed that after hoisting the loop $k.0$ from the innermost position of the array partitioning loop band, additional I/O modules for transferring the intermediate results of matrix $C$ are added, increasing the total resource usage. 
The key takeaway is: \textit{By placing loops that carry the flow dependences innermost in the array partitioning loop band, intermediate data are accumulated on-chip, eliminating the resource overheads brought by the additional I/O modules.} 



Latency-wise, different loop orderings will affect off-chip data communication. We compute the off-chip data movement for the two example designs in Figure~\ref{fig:autosa_flow}. For the first ordering $<i.0,j.0,k.0>$, final results of matrix $C$ are only drained out at the last iteration of loop $k.0$, leading to a total amount of data movement as:
\begin{equation}
\small
\vspace{-0.025in}
DM(C)=\lceil \frac{I}{T\_{I1}} \rceil \lceil \frac{J}{T\_{J1}} \rceil T\_{I1} \cdot T\_{J1}
\vspace{-0.025in}
\end{equation}

As for the second ordering $<i.0,k.0,j.0>$, the intermediate results of matrix $C$ are swapped off-chip at each loop iteration of $j.0$. We compute the total data movement of matrix $C$ as: 
\begin{equation}
\small
\vspace{-0.025in}
DM(C)=2 \lceil \frac{I}{T\_{I1}} \rceil \lceil \frac{K}{T\_{K1}} \rceil \lceil \frac{J}{T\_{J1}} \rceil T\_{I1} \cdot T\_{J1}
\vspace{-0.025in}
\end{equation}

To take into account both the inbound and outbound traffic of matrix $C$, we multiply the factor 2.
Compared to the first ordering, the second ordering introduces a higher amount of data movement for matrix $C$. For communication-bound designs, this could lead to a longer latency.
In addition to the loops that carry the flow dependence, loops that carry the read dependence will impact the data communication as well. We use matrix $A$ as an example. As shown in Figure~\ref{fig:autosa_flow}, with the loop ordering $<i.0, j.0, k.0>$, at each array partition, we load new array tiles with a size of $T\_{I1}\times T\_{K1}$ from array $A$. In comparison, with the loop ordering $<i.0,k.0,j.0>$, data of matrix $A$ are reused along the loop $j.0$. New data tiles are only loaded at each new loop iteration of loop $k.0$, reducing the data communication for matrix $A$ compared to the first ordering. Detailed equations of data movement for matrix $A$ can be found in Figure~\ref{fig:autosa_flow}. 
The key takeaway is: \textit{By placing array partitioning loops that carry the flow/read dependences innermost, data are reused on-chip, reducing the off-chip data communication and design latency.} 






Putting it all together, we have a complete picture of the pruning strategy.
\begin{theorem}
Given a program that can be mapped to systolic arrays, let $RL(r)$ be the set of array partitioning loops that carry the read/flow dependences associated with the array reference $r$, and $NRL(r)$ be the rest of the loops in the array partitioning loop band, the set of unique loop orderings $O$ can be obtained as the union of loop orderings in the form of $<NRL(r),RL(r)>$ for each array reference $r$. All the other loop orderings are dominated by $O$ in terms of resource usage and latency. Note that $RL(r)$ could be an empty set if there is no read/flow dependence associated with the $r$. For this case, the loop ordering is in the form of $<NRL(r)>$, and is added into $O$ for consideration.
\end{theorem}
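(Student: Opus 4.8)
The plan is to split the claimed dominance into the two metrics the theorem names — resource usage and latency — and to show that every ordering $\pi$ lying outside $O$ is weakly dominated (no worse in both metrics simultaneously) by some ordering $\pi' \in O$. Two facts established earlier in this section do the heavy lifting: resource is minimized when flow-dependence (accumulation) loops sit innermost, so intermediate results stay on-chip and no extra I/O modules are instantiated; and the off-chip traffic $DM(r)$ of a reference $r$ is minimized when its reuse loops $RL(r)$ are placed innermost, so each tile of $r$ is loaded once and reused across all inner iterations. Since, for a fixed dataflow and tiling, the PE count and buffer sizes are unaffected by the permutation, the only permutation-dependent resource cost is the I/O-module count; and latency is monotone non-decreasing in the per-reference data movements (more off-chip traffic can only lengthen the communication phase). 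It therefore suffices to control each $DM(r)$ together with the I/O-module count.

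First I would make the cost of an ordering explicit as a function of the permutation. For a reference $r$, a fresh tile must be fetched from off-chip every time a loop lying \emph{outside} the maximal innermost block of $RL(r)$-loops advances; counting these reloads yields a closed form of exactly the shape of the two displayed data-movement equations for matrix $C$, with the multiplicative factor of $2$ appearing precisely when a flow-dependence loop is not innermost (intermediate results must then be both written out and read back). This ties the resource penalty and the latency penalty to the \emph{same} structural feature: an $NRL(r)$-loop sitting inside some $RL(r)$-loop. It also explains why $O$ contains only one ordering per reference — the relative order of loops \emph{within} $NRL(r)$ and within $RL(r)$ does not affect either cost for $r$.

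The core of the argument is a rearrangement (exchange) lemma on adjacent transpositions of the permutation: swapping two adjacent loops to move an $RL(r)$-loop inward past an $NRL(r)$-loop never increases $DM(r)$, and moving a flow-dependence loop inward never increases the I/O-module count, since such a swap can only shrink the set of loops outside $r$'s innermost reuse block. Given an arbitrary $\pi \notin O$, I would let $r^\star$ be the reference whose reuse loop occupies the innermost slot of $\pi$, and then repeatedly apply the lemma to pull the remaining $RL(r^\star)$-loops inward until the ordering becomes $<NRL(r^\star), RL(r^\star)> \in O$. Each such step is weakly monotone for $DM(r^\star)$ and for resource.

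The main obstacle is the \emph{cross-reference} interaction: while pulling $RL(r^\star)$ inward helps $r^\star$, I must verify it does not raise $DM(r)$ for some other reference $r \neq r^\star$ and thereby break simultaneous dominance in latency. The observation I would exploit is that each adjacent swap affects only references that \emph{distinguish} the two swapped loops (one loop indexes the reference and the other does not), and that, because the innermost slot of $\pi$ was already held by an $RL(r^\star)$-loop, every other reference was already paying the full reload cost dictated by its own indexing loops, so the reorganization cannot tighten any block that would have reduced those costs. Making this precise — showing that all swaps used to build $\pi'$ are of the benign type for every $r \neq r^\star$ — is where the real work lies, and it rests on the partition structure $\{RL(r), NRL(r)\}$ induced by which index variables appear in each reference (for MM, each loop belongs to exactly one $RL(r)$, which is the cleanest case). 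Finally I would dispatch the edge case $RL(r)=\emptyset$: such an $r$ is reloaded on every iteration regardless of ordering, so its template $<NRL(r)>$ is a complete ordering that cannot be improved for $r$ and is correctly placed in $O$. Concluding, every $\pi \notin O$ is weakly dominated in both resource and latency by the constructed $\pi' \in O$, so the full Pareto frontier is contained in $O$.
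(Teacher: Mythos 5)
Your proposal follows the same basic route as the paper's proof sketch: given an ordering outside $O$, permute the reuse/flow-carrying loops of a chosen reference to the innermost positions to obtain a template ordering in $O$, then argue weak dominance separately for resource (the I/O-module count) and latency (per-reference off-chip traffic). The difference is in how much of that argument you actually carry out. The paper argues by contradiction, picks an arbitrary witness reference $r$ with a misplaced $RL(r)$ loop, moves all such loops innermost in one step, and directly asserts that the resulting ordering $\hat{o}$ dominates $o'$; it never addresses the point you correctly isolate as the main obstacle, namely that pulling $RL(r)$ inward can \emph{increase} the traffic of other references --- the paper's own data-movement comparison shows this, since $<[i.0,j.0],k.0>$ incurs more $DM(A)$ than $<[i.0,k.0],j.0>$ --- so per-reference improvement does not immediately yield simultaneous dominance. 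Your fix, choosing $r^{\star}$ to be the reference owning the innermost slot of $\pi$ so that (under the assumption that each loop carries reuse/flow for at most one reference) every other reference is already paying its maximal reload cost in $\pi$ and is left unchanged by the rearrangement, is precisely the step missing from the paper, and it is what makes the union-over-references structure of $O$ do real work. Two caveats you should still dispatch: first, $r^{\star}$ is undefined when the innermost loop of $\pi$ carries no reuse/flow for any reference (e.g., a batch-like loop indexing every array); in that case every reference already pays full cost in $\pi$, so any template with nonempty $RL(r)$ weakly dominates and an arbitrary choice of reference suffices. Second, the disjointness of the $RL$ sets, which both your argument and the paper's implicitly require, holds for MM and CNN but is not part of the theorem's hypotheses; your proposal at least names this dependence explicitly, whereas the paper's sketch does not.
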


\begin{proof}
Assume the above statement is false, i.e., there is a loop ordering $o'$ out of the set $O$ that achieves better performance than loop orderings in $O$. Then, there exists at least one loop $l_{rl}$ in $o'$ that carries the read/flow dependences for a certain array reference $r$, and is placed above a certain loop $l_{nrl}$ that belongs to $NRL(r)$. We group all such loops $l_{rl}$ into a set $RL'(r)$ and permute them to the innermost of the array partitioning loop band to generate a new loop ordering $\hat{o}$ that belongs to $O$. If $r$ is associated with flow dependences, $o'$ introduces additional I/O modules for loading in the intermediate results, increasing the resource usage compared to $\hat{o}$. If $r$ is associated with read/flow dependences, $o'$ increases off-chip data communication, and could lead to a longer latency than $\hat{o}$ if the design is bound with data communication of $r$. Overall, this loop ordering $o'$ is dominated by $\hat{o}$ in both resource usage and design latency, which contradicts to the initial assumption that $o'$ dominates loop orderings from $O$ in performance.
\end{proof}

For the MM example, loop $i.0$ carries the read dependence for array $B$, loop $j.0$ carries the read dependence for array $A$, and loop $k.0$ carries the flow dependence for array $C$. In total, there are three loop ordering candidates which lead to systolic arrays with potentially different performance. Note that as long as the innermost loop is fixed, the ordering of other loops will not impact the performance. In the rest of this paper, we use the annotation $<[i.0,j,0], [k.0]>$ to identify the set of loop orderings. All the loops in the same brackets can be permuted freely with equivalent performance. We will choose one ordering randomly in practice.
Table~\ref{table:tuning_designs} shows all the unique loop orderings for MM and CNN.

\begin{table}[t]
\centering
\caption{Unique systolic arrays generated by AutoSA for MM and CNN.}
\vspace{-0.05in}
\resizebox{\columnwidth}{!}{
\begin{tabular}{ccc}
\toprule
Application      & MM                                                                                                                & CNN                                                                                                                                             \\ \toprule
Dataflows        & {[}i{]}, {[}j{]}, {[}k{]}, {[}i,j{]}, {[}i,k{]}, {[}j,k{]}                                                        & \begin{tabular}[c]{@{}c@{}}{[}o{]}, {[}h{]}, {[}w{]}, {[}i{]}, \\ {[}o,h{]}, {[}o,w{]}, {[}o,i{]}, {[}h,w{]}, {[}h,i{]}, {[}w,i{]}\end{tabular} \\ \midrule
Loop Permutation & \textless{}{[}i,j{]},k\textgreater{}, \textless{}{[}j,k{]},i\textgreater{}, \textless{}{[}i,k{]}, j\textgreater{} & \textless{}{[}o,h,w{]},{[}i,p,q{]}\textgreater{}, \textless{}{[}o,i,p,q{]},{[}h,w{]}\textgreater{}, \textless{}{[}i,h,w,p,q{]},o\textgreater{}  \\ \midrule
\#Designs        & 18                                                                                                                & 30                                                                                                                                              \\ \bottomrule
\end{tabular}
}
\label{table:tuning_designs}
\end{table}

To summarize, given an input program, we use AutoSA to generate different dataflows and loop permutation orderings of the array partitioning loops, and leave the tiling factors as tunable parameters to be handled by the auto-tuner. 
The next section will discuss the details of the auto-tuner.
\section{Searching the Design Space}
\label{sec:scan}

\subsection{Evolutionary Search}
\label{sec:evo_details}
In the Odyssey framework, we select evolutionary search as the backbone search method. 
Evolutionary search~\cite{genetic} is a generic meta-heuristic algorithm inspired by biological evolution, in which individuals of a population gradually improve themselves through a series of biological mechanisms such as \textit{mutation}, \textit{crossover}, and \textit{selection}. 
In the context of hardware design space exploration, each design point is termed as an individual in the population. The feature vector that describes a design point is coined as the genome of the individual. Each design point is assigned a fitness score by its performance assessed by the real hardware measurement or the performance model. At each iteration, we select a group of individuals with the highest fitness scores in the population as the parents to produce the children in the next iteration (\textit{selection}). These parents will breed new individuals with \textit{mutation} and \textit{crossover} operations. In the next iteration, we filter out the low-fitted individuals and repeat the same reproduction process until finding the satisfactory solutions. 
\paragraph{Encoding scheme}
Each individual is encoded by the tiling factors used in AutoSA compilation passes. 

\paragraph{Mutation} Figure~\ref{fig:evo_mutation} shows one example for MM.
We first compute the updated loop bounds with the current tiling factors. The mutation process operates on the values of the loop bounds.
We implement two mutation methods: \textit{factorization-based} mutation and \textit{random} mutation. 

\begin{figure}[t]
\includegraphics[width=\columnwidth]{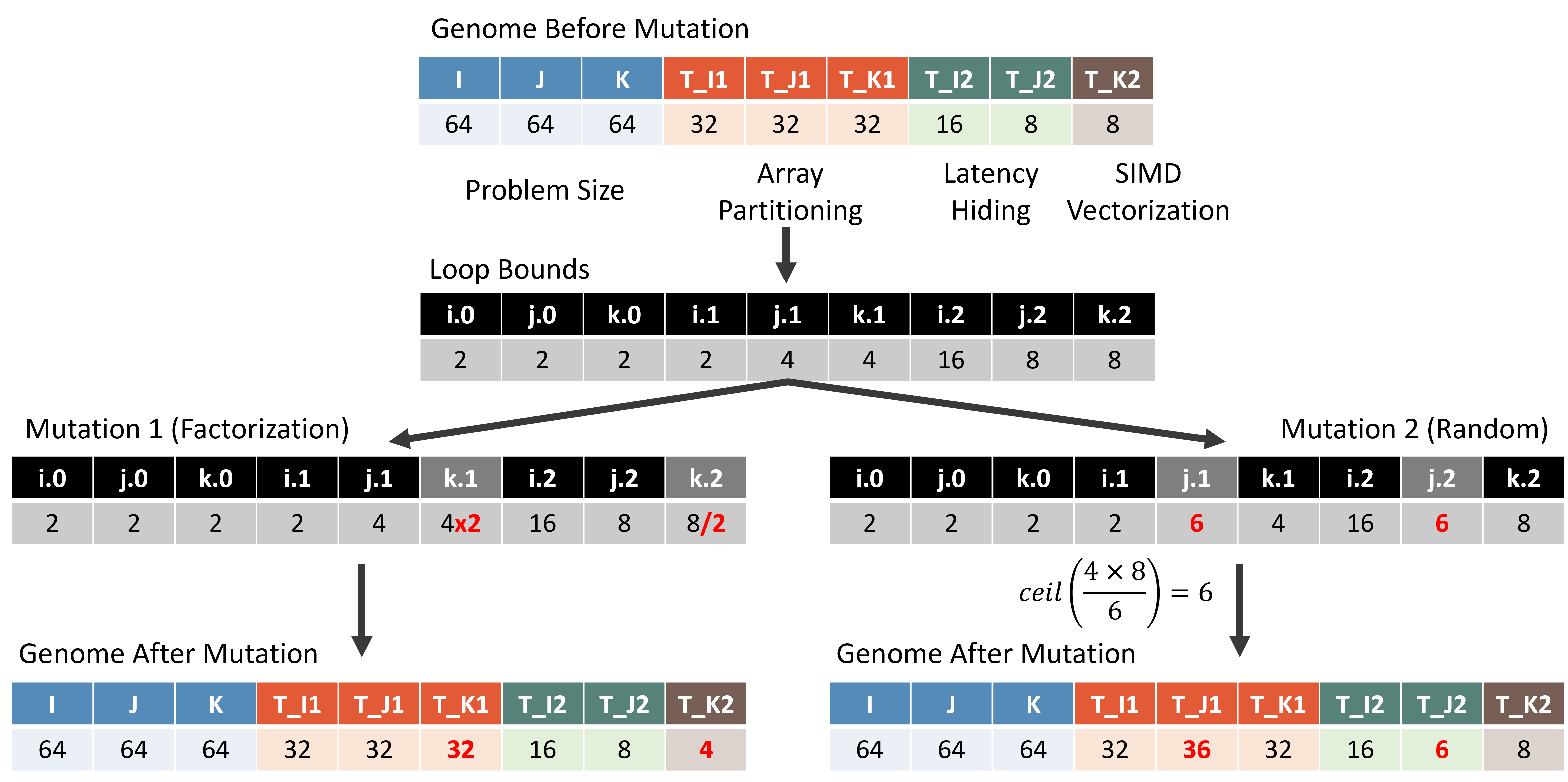}
\centering
\caption{Example of different mutation methods.}
\label{fig:evo_mutation}
\vspace{-0.05in}
\end{figure}

\textit{1) Factorization-based mutation:} We randomly choose one loop $l_1$, divide it by a random divisor $\alpha$ ($\alpha|l_1$), and multiply $\alpha$ to another loop $l_2$ that is derived from the same loop as $l_1$. For example, in Figure~\ref{fig:evo_mutation}, we select the loop $k.2$ and divide it by 2, then we choose the other loop $k.1$ and multiply its current value with 2. The factorization-based mutation keeps the product of the tiled loop sizes unchanged, guaranteeing the muted program is always valid.


The factorization-based mutation always chooses the divisors of the loop bounds. While such an implementation is common in many previous works~\cite{cosa,timeloop,dmazerunner,interstellar}, it could result in a reduced design space with inferior performance. 
Table~\ref{table:tuning_mutation_cmp} compares the best systolic arrays identified by two search methods for a $1024\times1024\times1024$ MM. The first method uses the factorization-based mutation that only considers divisor tiling factors. The second method employs the hybrid mutation method which we will introduce soon that considers non-divisor tiling factors. As shown in the table, the first design found with the factorization-based mutation has a 39\% performance gap compared to the second design. The use of non-divisor tiling factors (e.g., $T\_{I1}=129$, $T\_{J1}=130$) helps locate a better design that fully utilizes the DSPs, while the divisor-only design only uses 60\% of the DSPs due to the limited choices of tiling factors. This result reveals the necessity of taking non-divisors into consideration to achieve high performance. To accommodate for the non-divisor tiling factors, we introduce the second mutation scheme, \textit{random mutation}. 

\begin{table}[t]
\centering
\caption{Best solutions found by two mutation methods.}
\vspace{-0.05in}
\resizebox{\columnwidth}{!}{
\begin{tabular}{lccccccccc}
\toprule
Mutation Methods         & Throughput      & BRAM & DSP  & T\_{I1} & T\_{J1} & T\_{K1} & T\_{I2} & T\_{J2} & T\_{K2} \\ \toprule
Factorization   & 0.61$\times$ & 48\%    & 60\% & 64   & 128   & 128    & 16     & 4     & 8   \\ \hline
Factorization + Random  & 1.00$\times$ & 99\%    & 100\% & 129   & 130   & 64    & 3     & 13     & 4   \\ \bottomrule
\end{tabular}
}
\label{table:tuning_mutation_cmp}
\vspace{-0.05in}
\end{table}

\textit{2) Random mutation:} We randomly select one loop $l_1$ and mutate this loop by changing the loop bound to a random value $s\in [1,l_1]$. Next, we select another corresponding loop $l_2$ and change its loop bound to a new value $s'$ computed by $s'=ceil(l_1\times l_2/s)$.

We show one example of random mutation in Figure~\ref{fig:evo_mutation}. In the right example, we select the loop $j.2$ and change its original loop size from 8 to 6. Then we choose the other loop $j.1$ to mutate. We compute the new loop bound for $j.1$ as $ceil(4\times8/6)=6$. As a result, the tiling factor $T\_J1$ is changed from 32 to 36, which is a non-divisor of the problem size $J=64$. The random mutation method adjusts two loops at the same time as an effort to keep the product of these two loops with minimal changes. The use of \texttt{ceil()} function guarantees the new product is no less than the original product so that the mutated program is always legal.


When performing the mutation, we assign a probability $\alpha$ to execute the factorization-based mutation and the probability $1-\alpha$ to the random mutation. Based on a grid search, we set $\alpha$ to 0.4 by default.


\paragraph{Crossover}
The crossover operation exchanges the genomes of two individuals. To guarantee the validness of the offspring, we exchange the tiling factors associated with the same original loop together. 


\subsection{Mathematical Programming}
\label{sec:solver_details}
Although the mathematical programming (MP)-based method fails to identify the optimal design, the design it finds achieves relatively good performance and could be used as the initial population of evolutionary search. Odyssey implements a MP-based optimizer to produce high-quality seeds to evolutionary search. We formulate the optimization problem as follows.

\subsubsection{Constraints} A valid hardware design should not overuse the available memory and computation resource. For FPGA designs, we consider the BRAM and DSP usage.
\begin{equation}
\small
U_{mem} \leq Mem_{available},\,\,U_{DSP} \leq DSP_{available}
\vspace{-0.025in}
\end{equation}

\paragraph{Memory resource} We model the memory usage as:
\begin{equation}
\small
U_{mem} = \sum_{m\in Modules}{U_{mem}(m)\times N_m}
\vspace{-0.025in}
\end{equation}
which is a sum of the memory usage of each type of hardware module $m$, computed by multiplying the memory usage of module $m$ ($U_{mem}(m)$) with the total number of such modules $N_m$.
%

\paragraph{Computation resource}
We compute the DSP usage as:
\begin{align}
\small
U_{DSP} &= \sum_{m\in Modules}{U_{DSP}(m)\times N_m} \\
U_{DSP}(m) &= \sum_{op \in Compute\_Ops}{SIMD(op)\times U_{DSP}(op)}
\vspace{-0.025in}
\end{align}

We calculate the total amount of DSPs as a sum of the DSPs consumed by all the hardware modules and the per module DSP usage $U_{DSP}(m)$ is computed as a sum of the products of SIMD factor $SIMD(op)$ of each computation operator $op$ and the number of DSPs consumed by each single SIMD lane $U_{DSP}(op)$. We maintain an internal database for the per operator DSP usage $U_{DSP}(op)$.

%
%
%

\subsubsection{Objective Function}

Given the high complexity of the hardware designs, it is usually hard to have a close-formed performance model suitable for the solvers as the objective function. Instead, previous works chose different high-order functions that impact the performance as the objective functions~\cite{analytical_cnn,marvel,cosa}. 
We conducted an experiment on evaluating the effectiveness of several objective functions.

\paragraph{Objective 1: Computation resource} We use the total DSP usage $U_{DSP}$ as the optimization target. The heuristic is that a design with higher performance should utilize more DSPs.
\begin{equation}
\small
Obj1: min(-U_{DSP})
\vspace{-0.025in}
\end{equation}
\paragraph{Objective 2: Off-chip communication} 
We aggregate the off-chip data movement of all the arrays in the program.

\begin{equation}
\small
Obj2: min\sum_{a\in Arrays} DM(a)
\vspace{-0.025in}
\end{equation}

%

\paragraph{Objective 3: Off-chip communication - computation resource} 
This objective function takes both computation and communication into consideration. Ideally, we would like to maximize the computation resource and reduce the off-chip communication.

\begin{equation}
\small
Obj3: min(\sum_{a\in Arrays} DM(a)-U_{DSP})
\vspace{-0.025in}
\end{equation}

We use the off-the-shelf solver (AMPL~\cite{ampl} with Ipopt~\cite{ipopt}) to implement the optimization problem. All the metrics have been normalized. The best solution obtained from the solver is then fed to the evolutionary search as the initial population. Figure~\ref{fig:tuning_obj} shows the search traces of evolutionary search with different optimization targets. Table~\ref{table:tuning_obj_cmp} compares the performance of designs found by solvers.

\begin{figure}[t]
\includegraphics[width=0.7\columnwidth]{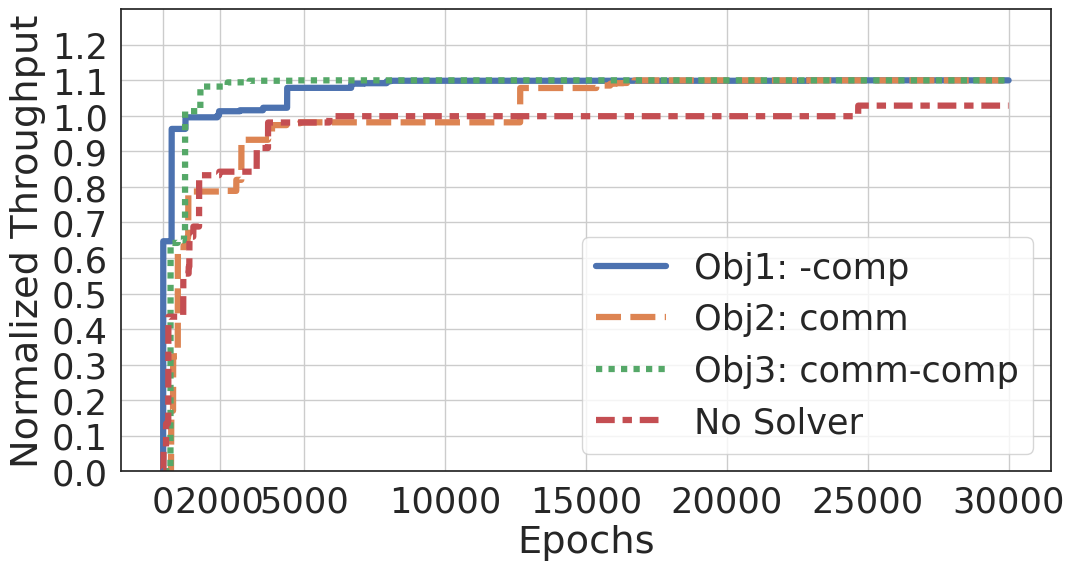}
\centering
\vspace{-0.05in}
\caption{Search traces of evolutionary search initiated with designs found by the MP-based optimizer with different objective functions.}
\vspace{-0.05in}
\label{fig:tuning_obj}
\end{figure}

As seen in the figure, all three objective functions help reduce the convergence time and yield better results compared to the evolutionary search-only method (annotated as \textit{No Solver}). Specifically, Obj3 helps significantly reduce the convergence time. With Obj3, the auto-tuner locates a better design than the baseline (No Solver) within 2000 epochs. In Odyssey, we use the Obj3 as the optimization target of the solver. 
The implications from this experiment are multi-fold. 

First, MP-based methods are insufficient to find optimal designs. As demonstrated in Table~\ref{table:tuning_obj_cmp}, the best design identified by the MP-based approach is 1.5$\times$ slower than the design discovered by Odyssey with a hybrid search method that combines both mathematical programming and evolutionary search.

\begin{table}[t]
\centering
\vspace{-0.05in}
\caption{Comparison of the best solutions found by MP-based methods and Odyssey framework.}
\resizebox{\columnwidth}{!}{
\begin{tabular}{ccccc}
\toprule
Methods & Opt. Target & Latency & Off-Chip Data Movement & DSP   \\ \toprule
Mathematical Programming  & Obj1:-comp       & 1.5$\times$ & 1.7$\times$   & 6.3$\times$ \\ \hline
Mathematical Programming  & Obj2:comm        & 8.6$\times$ & 1.0$\times$   & 1.0$\times$ \\ \hline
Mathematical Programming  & Obj3:comm-comp   & 2.5$\times$ & 1.0$\times$   & 4.0$\times$ \\ \hline
Odyssey    & comm-comp   & 1.0$\times$  & 4.9$\times$       & 6.7$\times$  \\ \bottomrule
\end{tabular}
}
\label{table:tuning_obj_cmp}
\end{table}

Second, the optimal design does not necessarily minimize the off-chip data communication.
In Table~\ref{table:tuning_obj_cmp}, the best design found by Odyssey introduces 4.9$\times$ more off-chip data communication than designs identified by Obj2 and Obj3. 


\section{Evaluation Results}
\label{sec:evaluate_results}
In this section, we evaluate the performance of the Odyssey framework.
We first validate the performance models from Odyssey against the real hardware. Then, we compare the proposed auto-tuning method with several other search methods to assess its efficiency. Lastly, we use Odyssey to investigate the architecture trade-offs of systolic arrays on MM and CNN. 

\subsection{Performance Model Validation}
\subsubsection{Approaches}
\begin{table}[t]
\centering
\caption{Parameters of validation workloads.}
\vspace{-0.025in}
\resizebox{0.7\columnwidth}{!}{
\begin{tabular}{ccc}
\toprule
Workload & Parameters                       & \#Designs \\ \toprule
MM       & [I,J,K]: [64,64,64]              & 18       \\ \hline
CNN      & [I,O,H,W,P,Q]: [16,16,16,16,3,3] & 30       \\ \bottomrule
\end{tabular}
}
\label{table:validation_workload}
\end{table}

\paragraph{Workloads} We evaluate the accuracy of the hardware models derived from Odyssey on two important workloads in HPC and deep learning, MM and CNN. 
Table~\ref{table:validation_workload} details the parameters for the chosen workloads.
We choose a relatively small problem size in consideration of the long RTL simulation time. All the experiments in this paper target Xilinx Alveo U250 FPGA.

\paragraph{Baselines} We measure the design latency by RTL simulation. We use the reported resource numbers from the HLS synthesis reports as the baseline for resource models.

\paragraph{Designs} We validate all 18 and 30 different designs generated for MM and CNN by AutoSA, as listed in Table~\ref{table:tuning_designs}.
To select the tiling factors, we randomly sample the design space.
We compare the performance numbers estimated by the models derived from Odyssey against the numbers measured by RTL simulation and HLS synthesis.

\subsubsection{Validation Results}
\begin{figure}[t]
    \centering
    \subfloat[Latency]{
        \includegraphics[width=0.9\columnwidth]{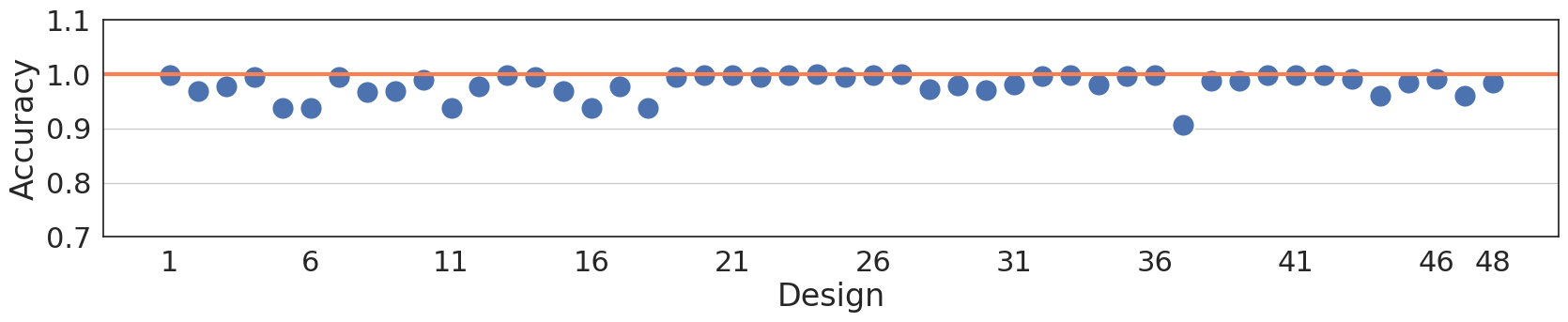}
        \label{fig:tuning_model_time}
    }\\
    \vspace{-0.15in}
    \subfloat[BRAM and DSP]{
        \includegraphics[width=0.9\columnwidth]{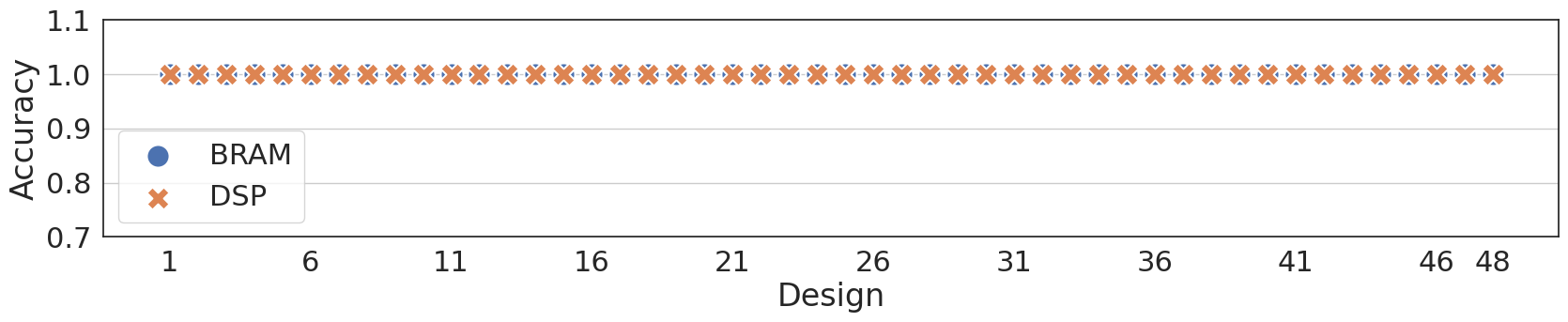}
        \label{fig:tuning_model_bram_dsp}
    }
    \vspace{-0.05in}
    \caption{Performance model validation results.}
    \vspace{-0.025in}
    \label{fig:tuning_model}
\end{figure}

Figure~\ref{fig:tuning_model} compares the model-predicted performance numbers against the measured numbers for all the designs.
The performance models generated by Odyssey are highly accurate, with estimation error rates of 1.99\%, 0\%, and 0\% for latency, BRAM, and DSP, respectively.

\subsection{Auto-Tuning Evaluation}
\label{sec:search_method_cmp}
\subsubsection{Approaches}
\paragraph{Workloads} We evaluate the performance of the proposed auto-tuning method using a $1024\times1024\times1024$ MM. 

\paragraph{Baselines} We use the following methods as baselines.
\begin{itemize}
    \item \textit{Random search}. We randomly sample the design space and update the best solutions. 
    \item \textit{Exhaustive search with pruning}. We extend the random search by pruning the design samples based on the DSP utilization. 
    As the smallest design among the 18 designs uses 30\% of DSPs, we set the DSP pruning threshold to 25\%.
    \item \textit{Simulated annealing}. 
    We use the Python package~\cite{python_annealing} as the baseline. 
    Based on a grid search, we designate the hyper-parameter temperature $T$ to be 200.
    We implement a customized step-taking function using the proposed hybrid mutation method for evolutionary search.
    \item \textit{Bayesian optimization}. 
    We use the Python package~\cite{python_bayesian} as the baseline. 
    \item \textit{OpenTuner~\cite{opentuner}}. OpenTuner is an auto-tuning framework built on an ensemble of several efficient search techniques. It has been demonstrated effective in cases such as searching the GCC compilation flags and optimal schedules for Halide programs~\cite{opentuner}.
    We use the latest release of OpenTuner from its Github repository~\cite{opentuner_github}.
    \item \textit{Reinforcement learning (RL)}. RL is a machine learning algorithm that can be used for hardware optimization.
    The previous work ConfuciuX~\cite{confuciux} implemented a two-step search algorithm for tuning the dataflow architectures which employs RL as the first step to locate a good sub design space and utilizes evolutionary search to perform a more fine-grained search later to find the best design. We use the open-source implementation from ConfuciuX as the RL baseline~\cite{confuciux_github}. ConfuciuX applied a 3-layer multi-layer perceptron (MLP) neural network for the policy network.
\end{itemize}

\paragraph{Designs} We compare our tuning methods against the baselines on all 18 different designs generated for MM by AutoSA. 

\paragraph{Evaluation method} For each systolic array design, we run the search method for 5 minutes and repeat it 3 times. 
The final results are averaged from the 3 runs. All the search methods are executed with a single CPU thread. RL baseline uses Pytorch which implements multi-threading during the training of MLP. All experiments are executed on a workstation with Intel Xeon E5-2680 v4 CPU.

\subsubsection{Results}

\paragraph{Search results quality}
Figure~\ref{fig:tuning_methods_all_cmp} compares the best throughput (1/latency) achieved by each tuning method on the 18 systolic array designs. The throughput is normalized against the optimal performance found by exhaustive search\footnote{We run an exhaustive search until it finishes.}.
Odyssey found design configurations with best performance on 13 designs out of the total 18 designs. 
For the remaining 5 designs, the performance gap is within 1\% of the best performance identified by other baselines (OpenTuner and simulated annealing). 
Overall, Odyssey locates designs that achieve more than 95\% of the optimal performance. 

\begin{figure}[t]
\includegraphics[width=0.9\columnwidth]{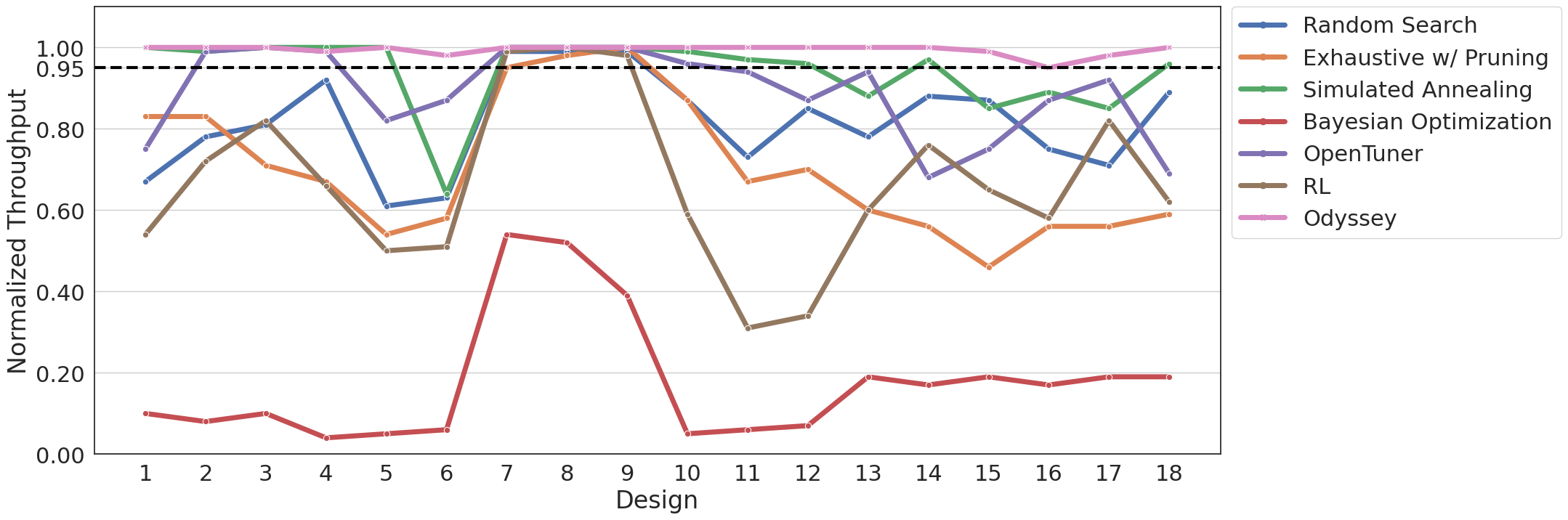}
\centering
\caption{Comparison of the best designs found by different tuning methods.}
\label{fig:tuning_methods_all_cmp}
\end{figure}

\paragraph{Sample efficiency}
In addition to the high-quality search results, Odyssey achieves high sample efficiency. Figure~\ref{fig:tuning_methods_epoch} compares the convergence traces of all the tuning methods on the design with the highest optimal throughput. As shown by the figure, Odyssey finds a good design configuration resulting in 93\% of the optimal performance after evaluating 3000 design samples. Simulated annealing earns the second best performance, locating a design that reaches 66\% of the optimal performance.

\begin{figure}[t]
\includegraphics[width=0.9\columnwidth]{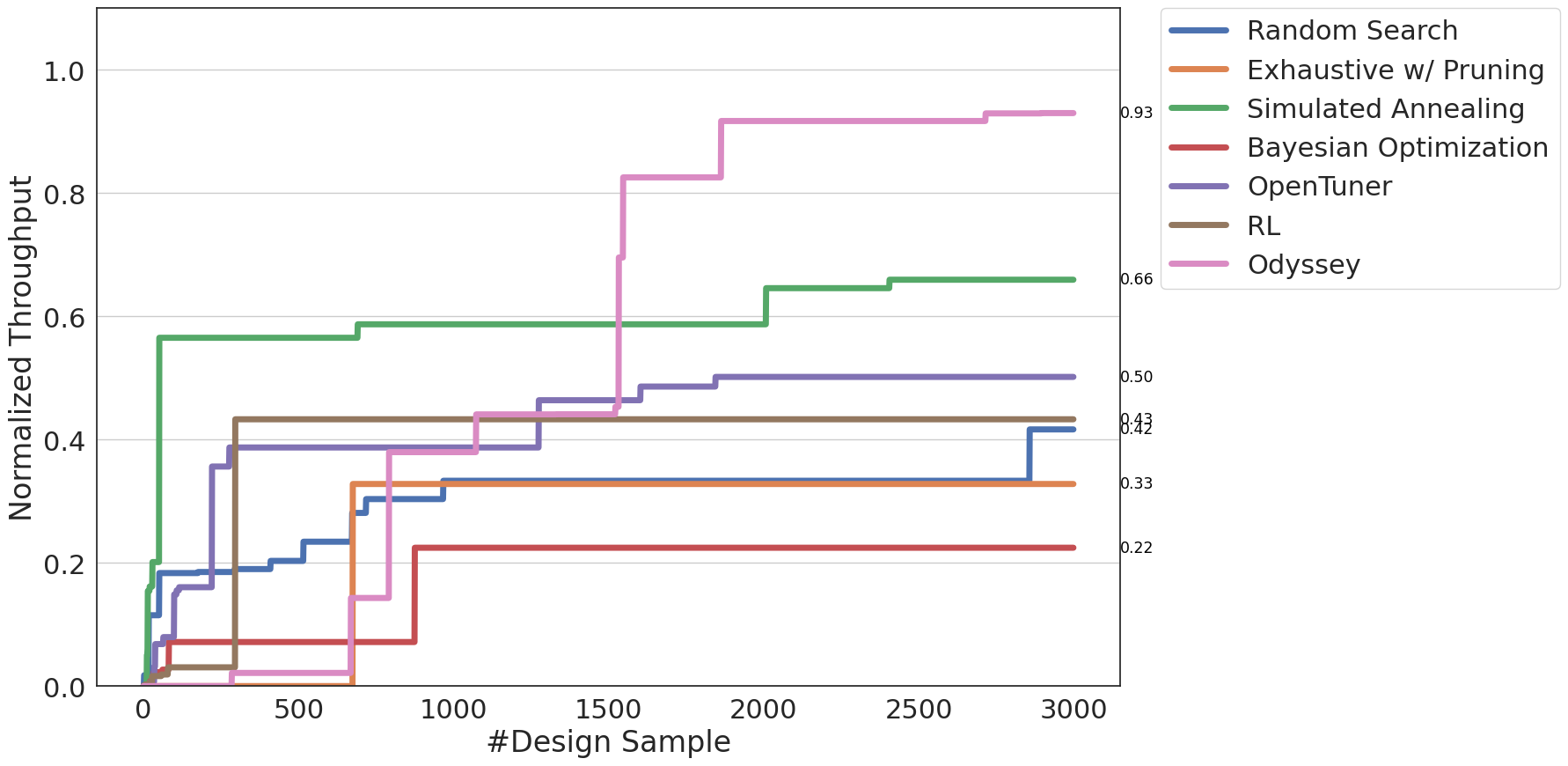}
\centering
\vspace{-0.025in}
\caption{Comparison of sample efficiency of different tuning methods.}
\vspace{-0.025in}
\label{fig:tuning_methods_epoch}
\end{figure}

\paragraph{Search time}
\begin{figure}[t]
\includegraphics[width=0.9\columnwidth]{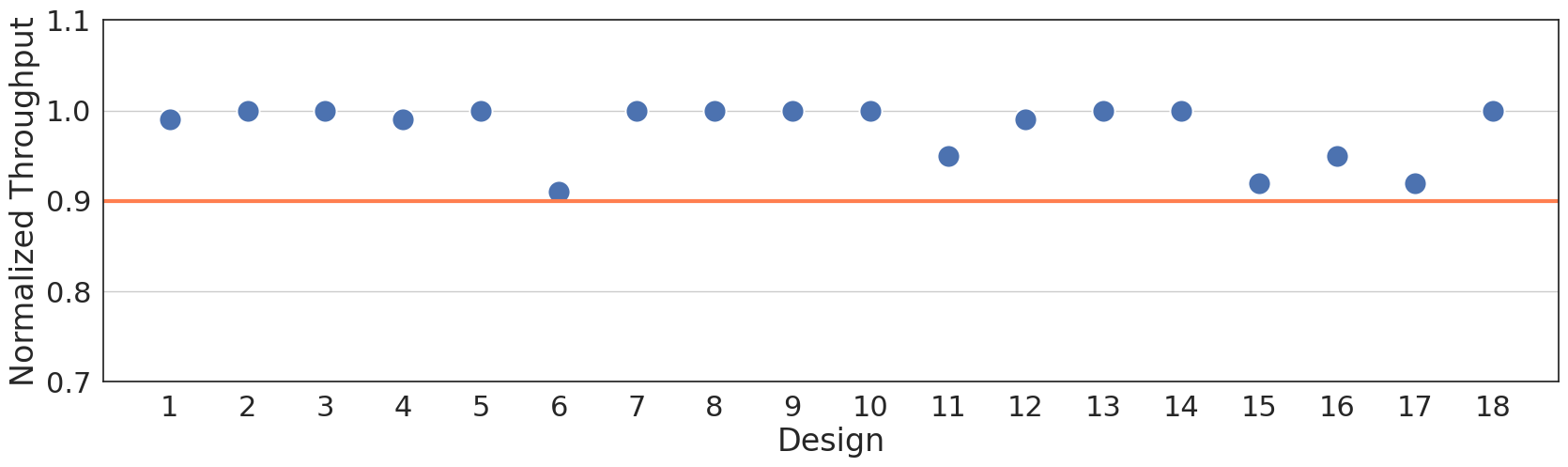}
\centering
\vspace{-0.025in}
\caption{Best designs found in 5 seconds by Odyssey for all different designs of MM.}
\vspace{-0.025in}
\label{fig:tuning_time_limit}
\end{figure}

Lastly, Figure~\ref{fig:tuning_time_limit} shows the performance of the best designs located by Odyssey in 5 seconds. Odyssey finds designs achieving over 90\% of the optimal performance for all the designs in 5 seconds with a single CPU thread. 

\subsection{Application 1: MM}
In this section, we perform a detailed architecture comparison of different systolic arrays for matrix multiplication. Figure~\ref{fig:tuning_mm_cmp} displays the normalized throughput and design usage of different systolic arrays for MM. We draw the following conclusions from this figure.

\begin{figure}[t]
    \centering
    \subfloat[Throughput]{
        \includegraphics[width=0.45\columnwidth]{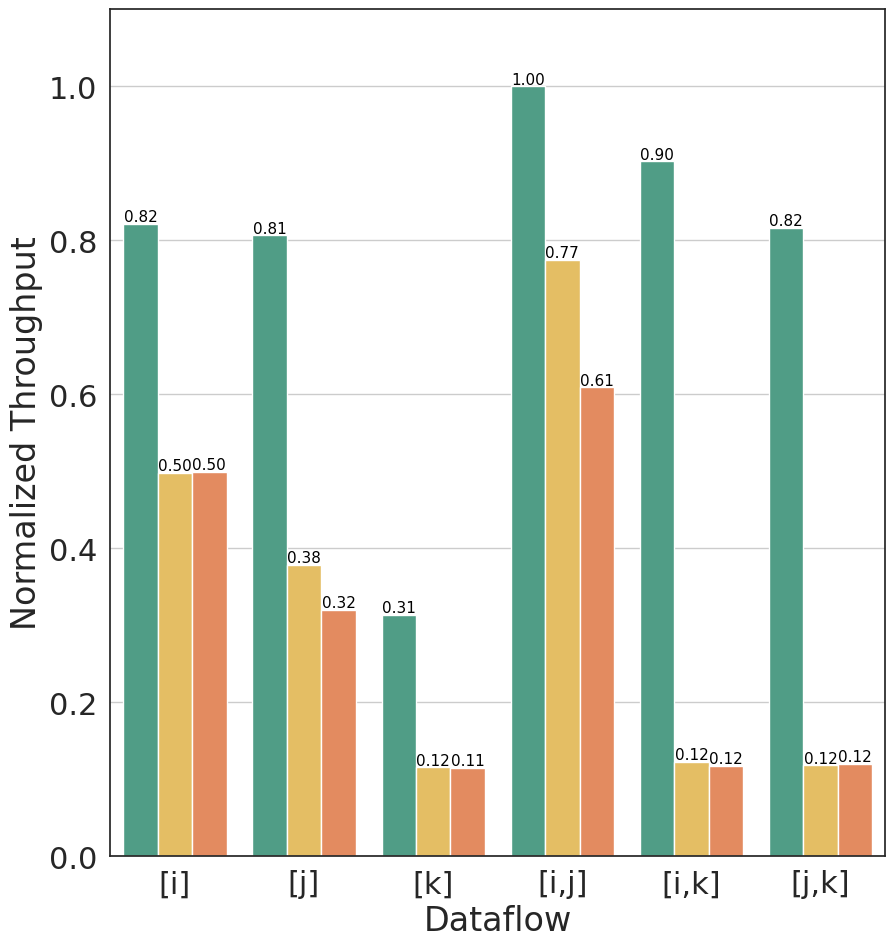}
        \label{fig:tuning_dataflow_mm_latency}
    }
    \subfloat[DSP]{
        \includegraphics[width=0.45\columnwidth]{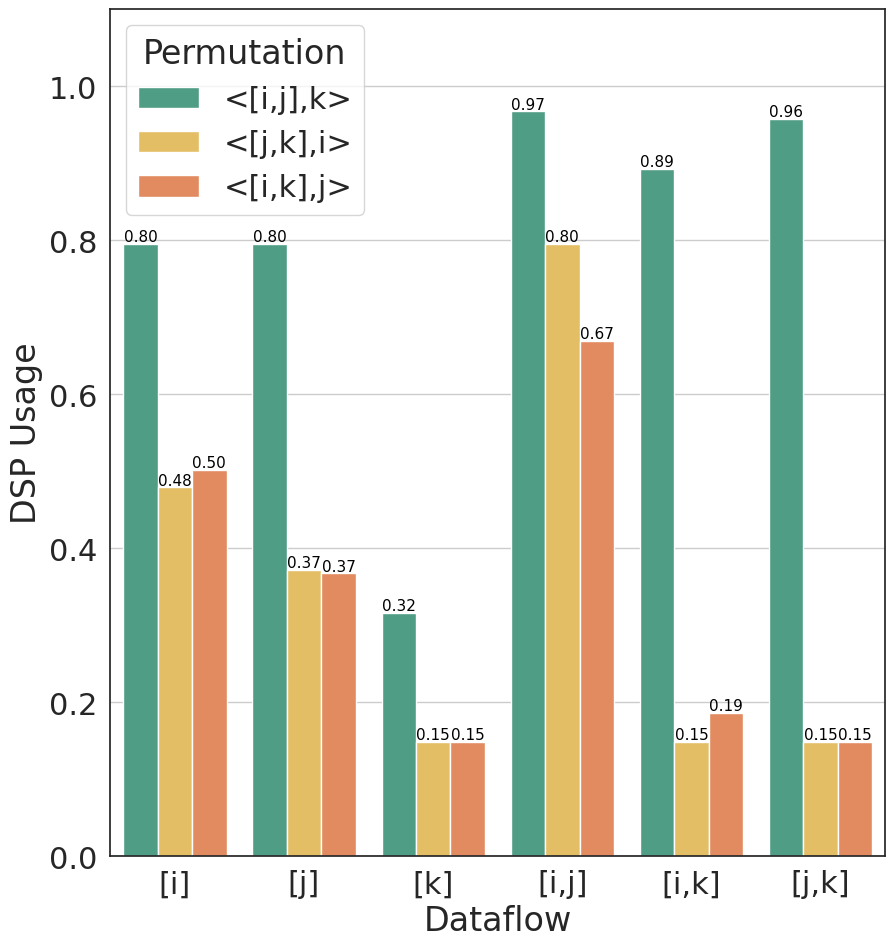}
        \label{fig:tuning_dataflow_mm_latency_dsp}
    }
    \vspace{-0.05in}
    \caption{Normalized Throughput and DSP usage of different designs for 1024x1024x1024 MM.}
    \vspace{-0.05in}
    \label{fig:tuning_mm_cmp}
\end{figure}

\paragraph{Loop permutation} The loop permutation ordering $<[i,j],k>$ dominates the performance across different dataflows. The major reason for such a performance gap is the extra memory consumed by the two other loop orderings. As discussed earlier in Section~\ref{sec:loop_permute}, the loop orderings $<[j,k],i>$ and $<[i,k],j>$ introduce additional I/O modules for loading in the intermediate results, leading to a higher memory usage than the loop ordering $<[i,j],k>$. Table~\ref{table:permute_cmp} illustrates the resource usage breakdown for three designs with the same dataflow $[i]$ and different loop permutation orderings. As shown in the table, designs with loop orderings $<[j,k],i>$ and $<[i,k],j>$ allocate more BRAMs to I/O modules for matrix $C$, which limits the size of array ($\#PE$), leading to a poorer performance than the ordering $<[i,j],k>$.

\begin{table}[t]
\centering
\vspace{-0.05in}
\caption{Resource breakdown of three designs for MM with the same dataflow $[i]$.}
\vspace{-0.025in}
\resizebox{\columnwidth}{!}{
\begin{tabular}{cccccccc}
\toprule
Loop Orderings                       & Latency & \#PE & \multicolumn{5}{c}{BRAM Usage}                                 \\ \toprule
                                     &         &      & Total & I/O Module (A) & I/O Module (B) & I/O Module (C) & PE  \\ \toprule
\textless{}{[}i,j{]},k\textgreater{} & 1.00$\times$ & 1.66$\times$  & 3,621  & 76\%           & 1\%            & 19\%           & 5\% \\ \hline
\textless{}{[}j,k{]},i\textgreater{} & 1.65$\times$ & 1.00$\times$  & 3,519  & 47\%           & 3\%            & 47\%           & 3\% \\ \hline
\textless{}{[}i,k{]},j\textgreater{}& 1.64$\times$ & 1.05$\times$  & 3,594  & 48\%           & 1\%            & 48\%           & 3\% \\ \bottomrule
\end{tabular}
}
\label{table:permute_cmp}
\end{table}

\paragraph{Dataflow} Among the designs with the same loop ordering, the dataflow $[i,j]$ achieves the best performance because the dataflow $[i,j]$ exploits the most degrees of parallelism from the application. For MM, we vectorize the loop $k$ for all the designs which serve as the first degree of parallelism. 
The space dimensions of the systolic array exploit the rest of the degrees of parallelism.
1D systolic arrays can exploit at most one more degree of parallelism. In comparison, the dataflow $[i,j]$ exploits two more degrees of parallelism ($i$ and $j$). Other 2D systolic arrays ($[i,k]$ and $[j,k]$) include $k$ in the space dimensions such that they could extract at most one more degree of parallelism as similar as 1D arrays. More dimensions of parallelism help cover more profitable designs that fully utilize the on-chip resource. As seen in Figure~\ref{fig:tuning_dataflow_mm_latency_dsp}, the dataflow $[i,j]$ utilizes the most DSPs, achieving the highest throughput among all the designs.

\subsection{Application 2: CNN}
\label{sec:cnn_dataflow_cmp}

In the previous section, we perform a detailed performance analysis of different systolic array designs for MM. 
This section applies the same mythology and compares the performance of different systolic arrays for CNN.

As shown in Table~\ref{table:tuning_designs}, Odyssey produces 30 different designs for CNN. We evaluate these designs on the VGG16 network~\cite{vgg}. The VGG16 network includes 13 convoutional (CONV) layers and 3 fully-connected layers. We only examine the mapping of CONV layers in this work. We annotate the 13 CONV layers as CONV[1-13] following their order in the network.
Figure~\ref{fig:tuning_dataflow_conv1_3} presents the detailed evaluation results of the first 2 CONV layers of VGG16. We make the following conclusions from this figure.

\begin{figure}[t]
    \centering
    \subfloat[CONV1]{
        \includegraphics[width=0.45\columnwidth]{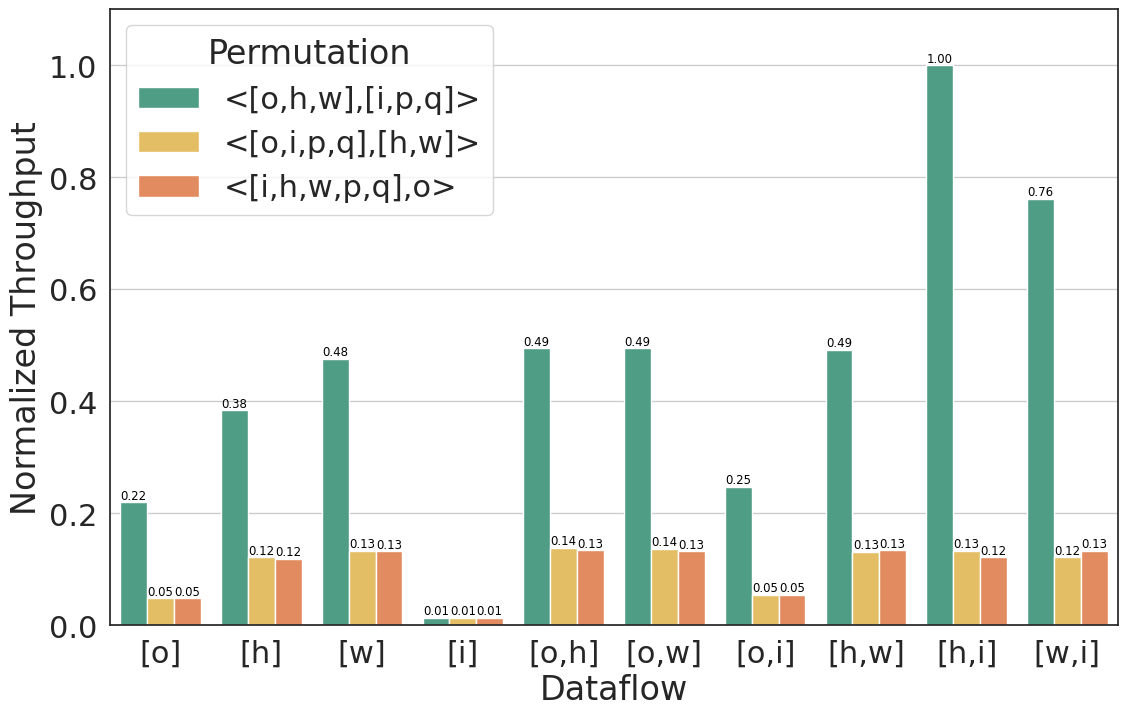}
        \label{fig:tuning_dataflow_conv1}
    }
    \subfloat[CONV2]{
        \includegraphics[width=0.45\columnwidth]{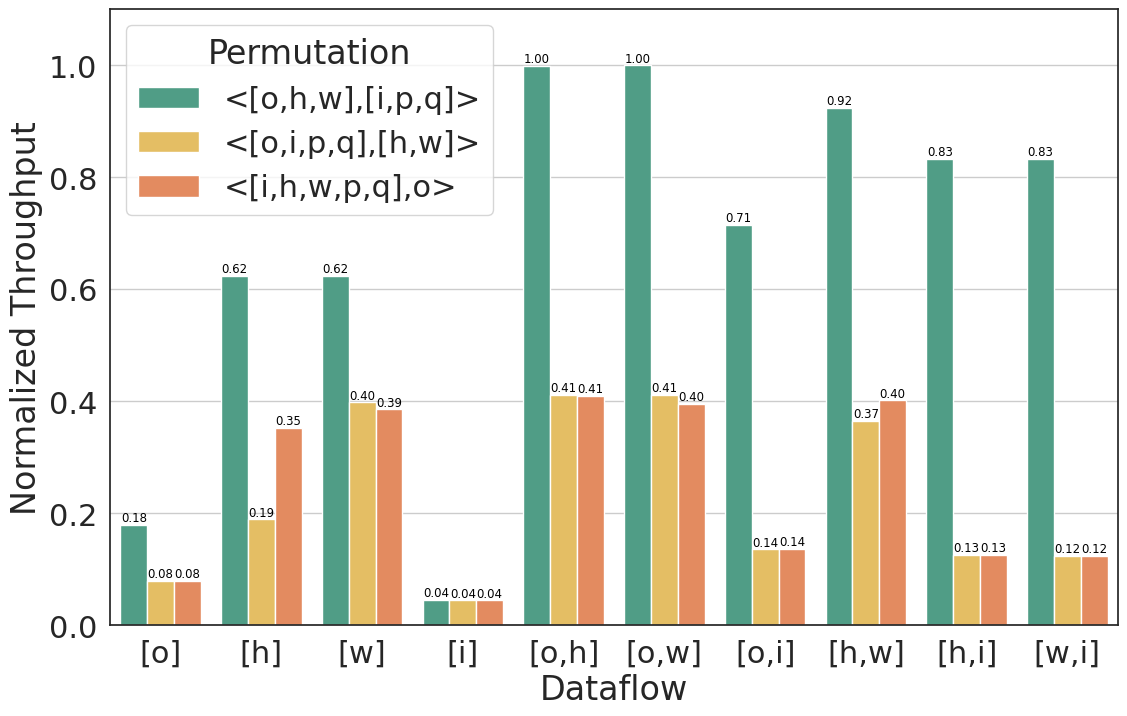}
        \label{fig:tuning_dataflow_conv2}
    }
    \vspace{-0.05in}
    \caption{Normalized throughput of different systolic arrays on CONV1 and CONV2 of VGG16. Throughput of each systolic array design is normalized against the highest number achieved on each CONV layer.}
    \label{fig:tuning_dataflow_conv1_3}
    \vspace{-0.1in}
\end{figure}

\paragraph{Loop permutation} Consistent with the observation from the MM case study, designs with the loop ordering $<[o,h,w],[i,p,q]>$, which eliminates the additional I/O modules for transferring the intermediate results of output feature maps, result in the best performance across all the dataflows.

\paragraph{Dataflow} The performance of different dataflows varies across CONV layers. On CONV1, we observe that dataflow $[h,i]$ delivers the best performance. However, on CONV2, dataflows $[o,h]$ and $[o,w]$ dominate other designs. Below, we examine the best designs on these two CONV layers in detail. 

Figure~\ref{fig:tuning_conv_dataflow_arch_cmp} depicts the detailed architectures of these two designs. AutoSA implements a multi-level I/O network for transferring the data (e.g., L1, L2, L3).
To increase the I/O throughput, we pack multiple data elements between the I/O modules.
A larger I/O bus width introduces more hardware resources. 
As a consequence, the bus width gradually increases at higher levels as a trade-off between the data transfer throughput and hardware resource. 

\begin{figure}[t]
    \centering
    \vspace{-0.05in}
    \subfloat[{[h,i]}]{
        \includegraphics[width=0.43\columnwidth]{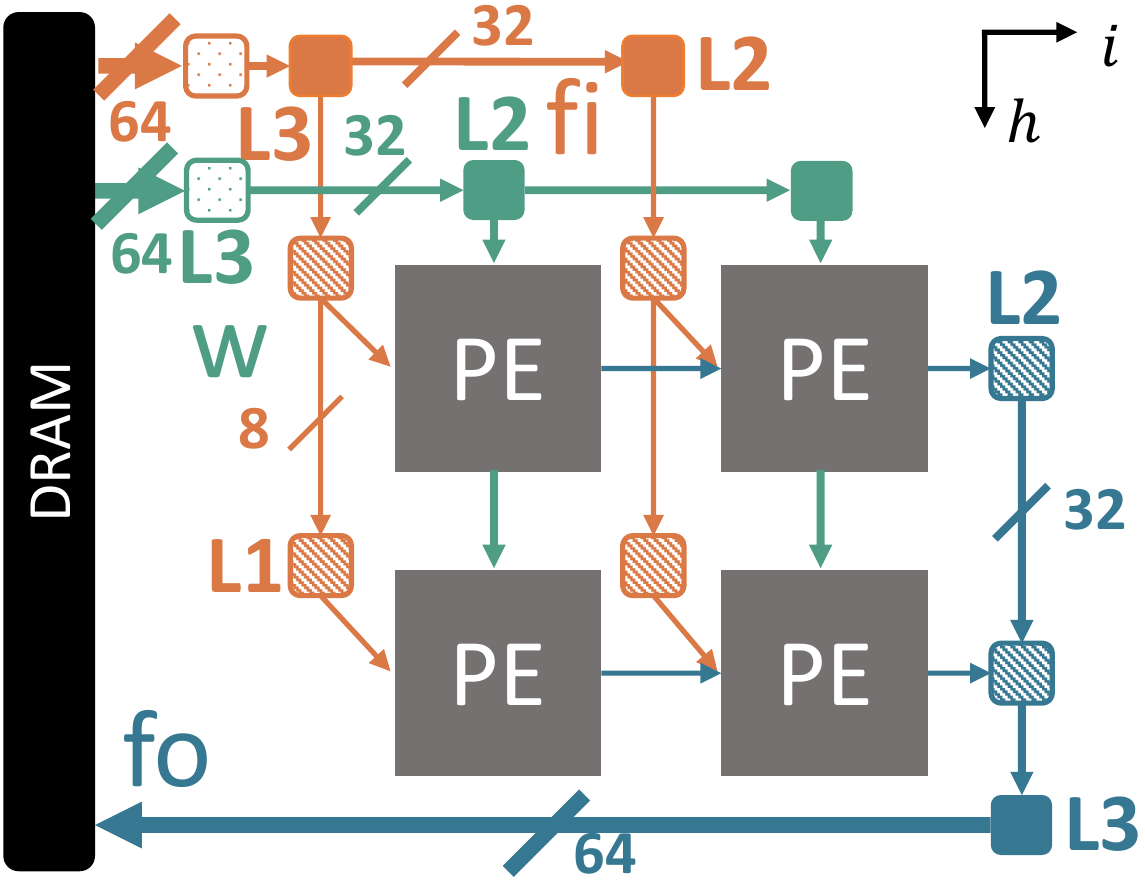}
        \label{fig:tuning_conv_oh}
    }
    \subfloat[{[o,h]}]{
        \includegraphics[width=0.43\columnwidth]{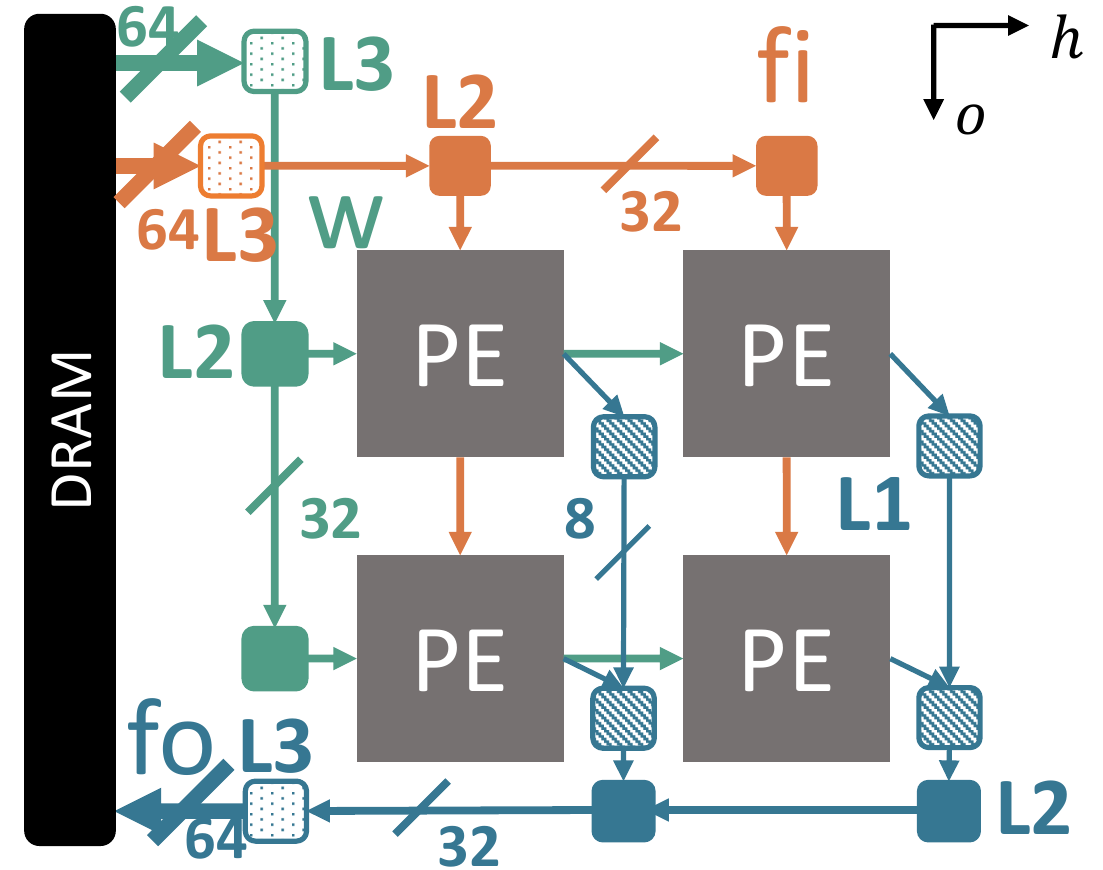}
        \label{fig:tuning_conv_hi}
    }
    \vspace{-0.05in}
    \caption{Architectures of two systolic arrays with dataflows $[h,i]$ and $[o,h]$ for CNN.}
    \vspace{-0.05in}
    \label{fig:tuning_conv_dataflow_arch_cmp}
\end{figure}

\begin{table}[t]
\centering
\caption{Performance comparison of dataflows $[h,i]$ and $[o,h]$ on the CONV1 and CONV2 of VGG16.}
\resizebox{\columnwidth}{!}{
\begin{tabular}{ccccccccccc}
\toprule
CONV Layer         & Problem Sizes  & Dataflow  & SIMD    & Latency & PE Latency & $fo$ I/O Latency & DSP  & DSP Eff. & T\_{I1} & Bottleneck   \\ \toprule
\multirow{2}{*}{1} & \multirow{2}{*}{\begin{tabular}[c]{@{}c@{}}{[}I,O,H,W,P,Q{]}: \\ {[}3,64,224,224,3,3{]}\end{tabular}}  & {[}h.i{]} & {[}i{]} & 2.86$\times$  & 2.85$\times$     & 1.00$\times$(L2)     & 1,120 & 67\%     & \textcolor{burntorange}{4}     & Problem size \\ \cline{3-11} & & {[}o,h{]} & {[}i{]} & 3.99$\times$  & 2.85$\times$     & \textcolor{burntorange}{3.99$\times$(L1)}    & 1,120 & 48\%     & 4     & I/O network  \\ \hline
\multirow{2}{*}{2} & \multirow{2}{*}{\begin{tabular}[c]{@{}c@{}}{[}I,O,H,W,P,Q{]}: \\ {[}64,64,224,224,3,3{]}\end{tabular}} 
& {[}h.i{]} & {[}i{]} & 7.16$\times$ & 7.13$\times$    & 2.14$\times$(L2)     & \textcolor{burntorange}{7,680} &  84\%     & 64    & Parallelism  \\ \cline{3-11} 
& & {[}o,h{]} & {[}i{]} & 5.88$\times$ & 5.85$\times$    & 4.17$\times$(L1)     & 8,320 & 94\%     & 64    &              \\ 
\bottomrule
\end{tabular}
}
\label{table:tuning_conv_analysis}
\end{table}

Table~\ref{table:tuning_conv_analysis} summarizes the performance details of two designs with dataflows $[h,i]$ and $[o,h]$ on first two CONV layers. On CONV1, the design latency of dataflow $[o,h]$ is bound by the data transfer latency of the L1 I/O modules that drain out the output feature maps $fo$. 
In comparison, in dataflow $[h,i]$, data of $fo$ are accumulated across PEs and drained out by L2 I/O modules. 
As more data are packed between the L2 I/O modules than L1 I/O modules (32 Bytes vs. 8 Bytes), the I/O network is no longer the performance bottleneck for dataflow $[h,i]$. Instead, the performance loss comes from the zero padding on the input channel dimension $i$. The best first-level tiling factor $T\_{I1}$ for both dataflows is 4. As a consequence, the input dimension $i$ is padded from 3 to 4 for both dataflows, introducing $25\%$ computation overheads. 

On CONV2, the I/O network is no longer the performance bottleneck for the dataflow $[o,h]$. The design is compute-bound as the input dimension $i$ increases from 3 to 64 on this layer and there is more computation inside the PEs on accumulating the intermediate results along the $i$ dimension. We see from Table~\ref{table:tuning_conv_analysis} that PE latency now dominates the design. On this layer, dataflow $[h,i]$ delivers a slightly lower performance than the $[o,h]$ due to the fewer degrees of parallelism to explore. Both dataflows vectorizes the loop $i$. The dataflow $[o,h]$ explores two more levels of parallelism by mapping the loops $o$ and $h$ to space dimensions. In comparison, dataflow $[h,i]$ only exploits one more level of parallelism on the loop $h$. With more degrees of freedom to use in the design space, we are able to find a design which utilizes more DSPs and achieves a higher performance.

Finally, Figure~\ref{fig:tuning_vgg16_network} presents the best throughput attained by systolic arrays with different dataflows on VGG16. We fix the loop ordering to $<[o,h,w],[i,p,q]>$ for all dataflows. Figure~\ref{fig:tuning_vgg16_network_mean} computes the geometric mean of the best throughput of each dataflow over the entire network. 

\begin{figure}[t]
\includegraphics[width=0.98\columnwidth]{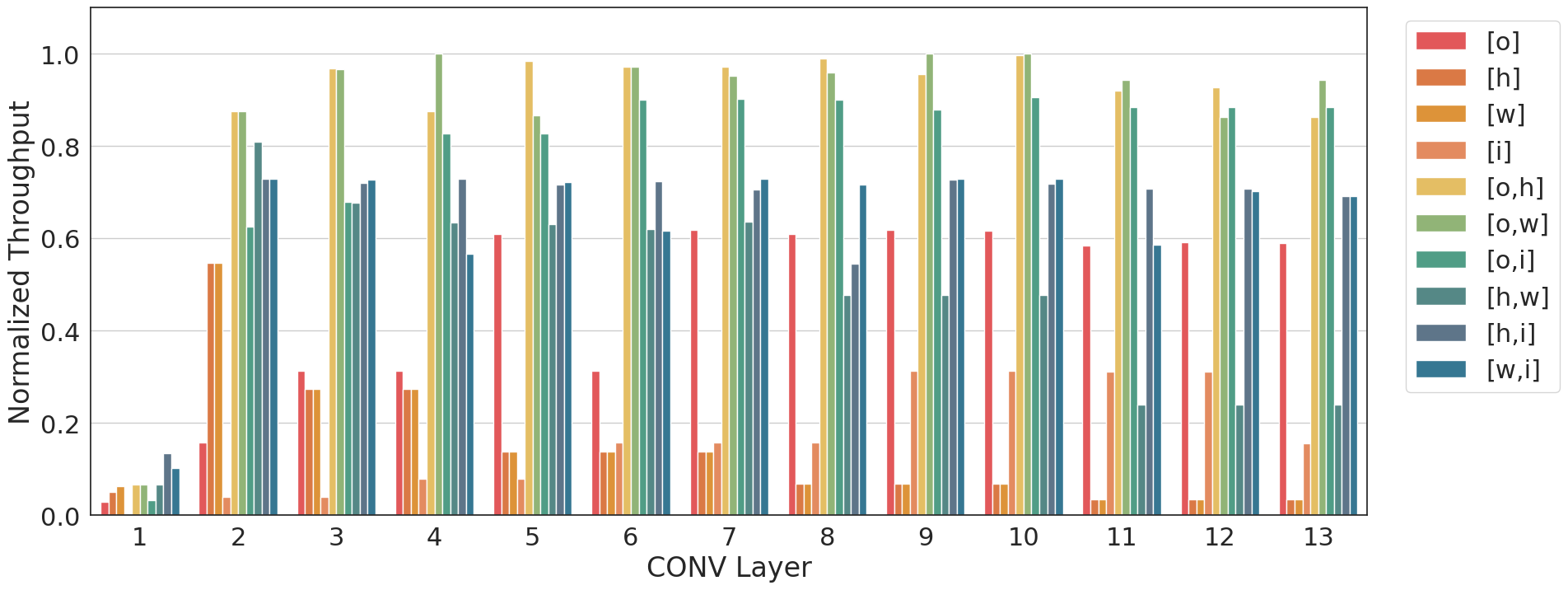}
\centering
\vspace{-0.05in}
\caption{Throughput of different dataflows on VGG16. Throughput of each design is normalized against the highest throughput achieved on the entire network.}
\label{fig:tuning_vgg16_network}
\vspace{-0.05in}
\end{figure}

\begin{figure}[t]
    \centering
    \subfloat[VGG16]{
        \includegraphics[width=0.43\columnwidth]{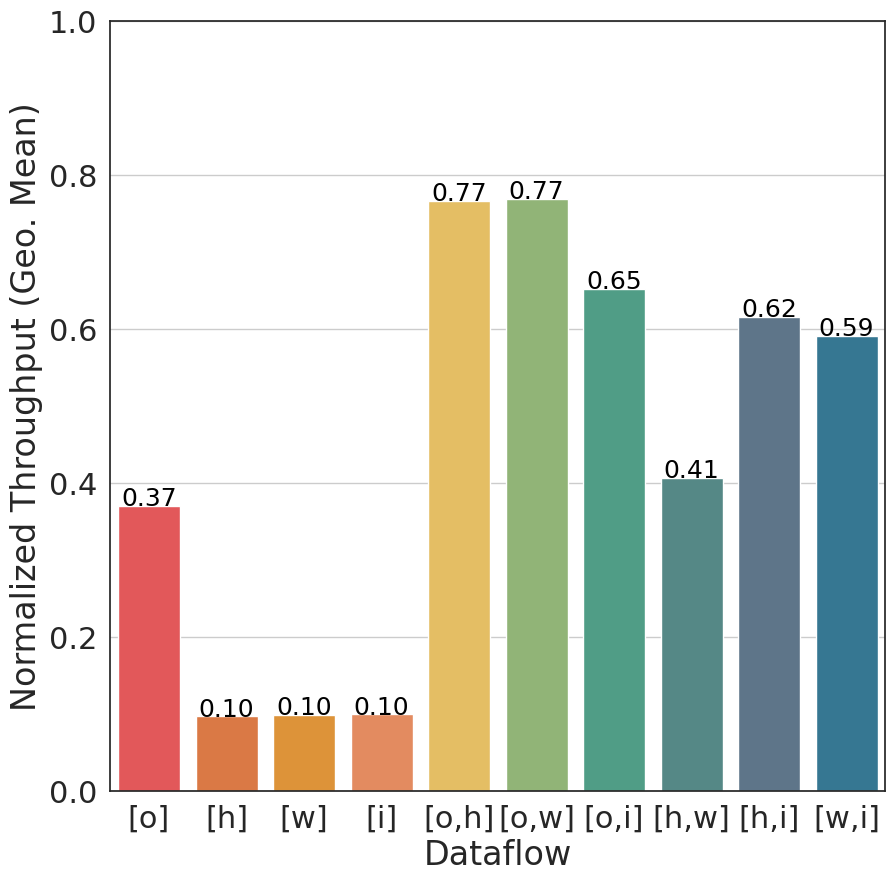}
        \label{fig:tuning_vgg16_network_mean}
    }
    \subfloat[ResNet50]{
        \includegraphics[width=0.43\columnwidth]{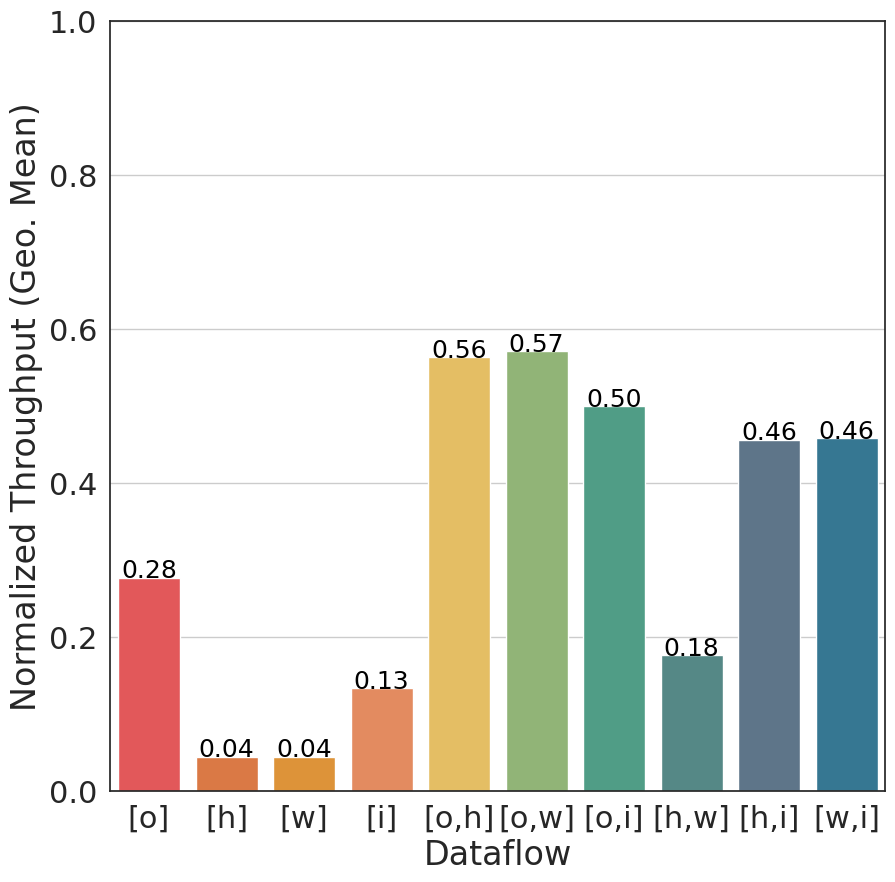}
        \label{fig:tuning_resnet50_network_mean}
    }
    \vspace{-0.05in}
    \caption{Geo. Mean of the throughput achieved by different dataflows across all the CONV layers in VGG16 and ResNet50.}
    \vspace{-0.1in}
    \label{fig:tuning_network_mean}
\end{figure}

Overall, we observe that 2D systolic arrays achieve a higher performance than 1D arrays with more degrees of parallelism to exploit. Dataflows $[o,h]$ and $[o,w]$ deliver the best performance among all the dataflows, followed by $[o,i]$, $[h,i]$, and $[w,i]$. The best average throughput of a single dataflow on VGG16 is 77\% of the peak layer-wise throughput obtained on the entire network. This is due to the divergent problem sizes of each CONV layer that limit the performance of a single systolic array. We applied the same methodology on another CNN, ResNet50~\cite{resnet}, and summarized the results in Figure~\ref{fig:tuning_resnet50_network_mean}. ResNet50 is more compact than VGG16, with less parallelism and data locality to exploit. 
This results in a reduction of the best average throughput of a single array to 57\% of the peak layer-wise throughput on ResNet50.

Ideally, we would like to achieve 100\% average throughput such that computation resource is fully utilized for each CONV layer. However, in reality, we ended with only 77\% and 57\% for VGG16 and ResNet50, respectively. This indicates that mapping CNNs to a single systolic array will face the issue of resource underutilization. 
Several previous works also identified this issue~\cite{tgpa,dnnexplorer,herald,shenmaximizing,jouppi2020domain}.
In terms of solutions, TPU chose to increase the batch size to improve the overall throughput~\cite{jouppi2020domain}. On FPGAs, previous works like DNNExplorer~\cite{dnnexplorer} implemented a multi-array architecture that customizes each smaller array to different CONV layers to improve the overall throughput. 


\section{Lessons Learned from This Work}
\label{sec:discuss}
This section summarizes four major lessons we learned from this work.

\paragraph{Lesson 1: Incomplete and inaccurate design space modeling could lead to inferior search results and skewed architecture conclusions.} We have demonstrated that considering only divisor tiling factors or overlooking the prologue and epilogue phases could lead to inferior designs. This could further hurt the validity of the architecture conclusions derived from these results. For example, Figure~\ref{fig:tuning_mm_cmp_skew} illustrates the search results for MM with only divisor tiling factors. With the limited design space, the best systolic array design only uses 60\% of the DSP. Furthermore, architects may draw skewed conclusions from these results such as all three 2D array dataflows achieve the equivalent performance, which contradicts to the findings as seen in Figure~\ref{fig:tuning_mm_cmp}.

\begin{figure}[t]
    \centering
    \vspace{-0.05in}
    \subfloat[Throughput]{
        \includegraphics[width=0.47\columnwidth]{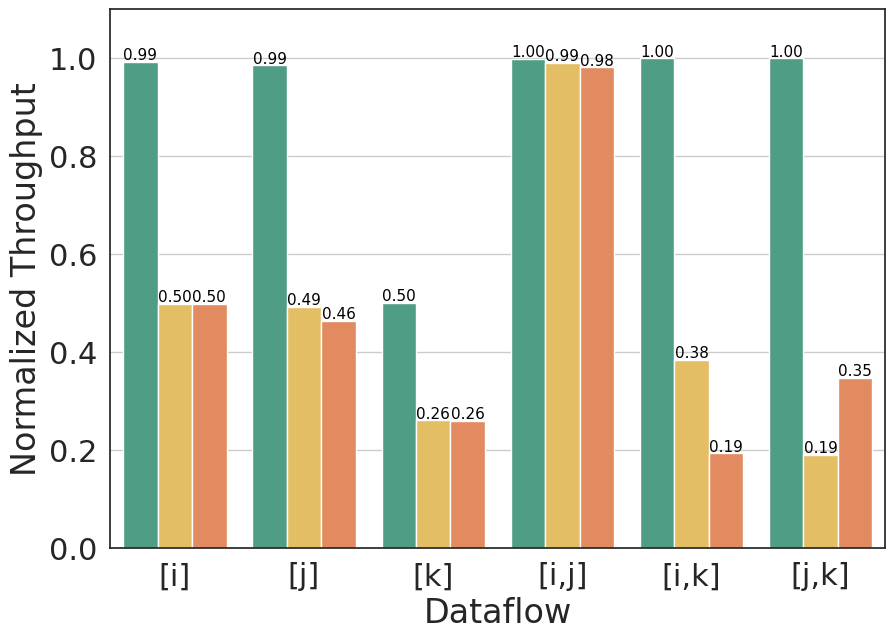}
        \label{fig:tuning_dataflow_mm_latency_skew}
    }
    \subfloat[DSP]{
        \includegraphics[width=0.47\columnwidth]{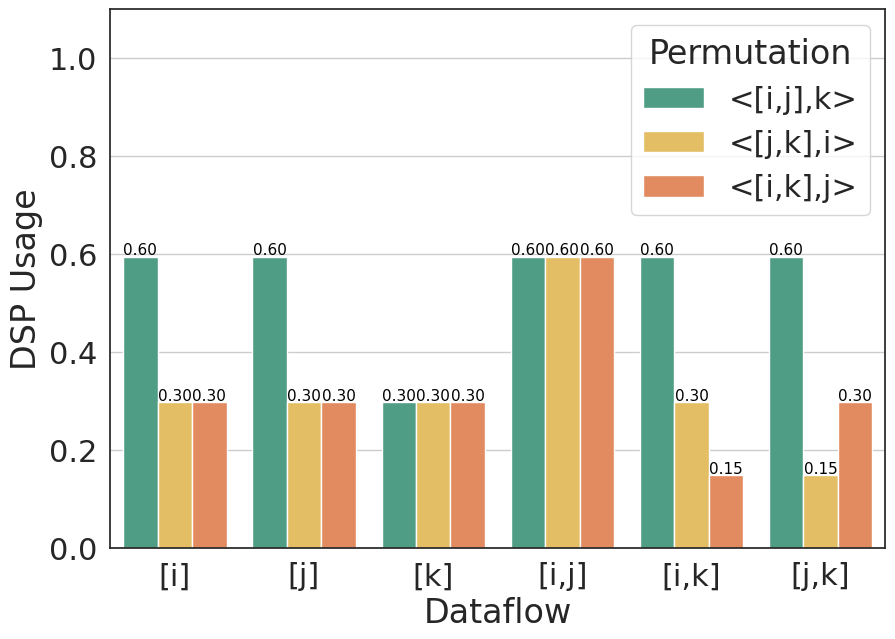}
        \label{fig:tuning_dataflow_mm_latency_dsp_skew}
    }
    \vspace{-0.05in}
    \caption{Throughput and DSP usage of different designs with only divisor tiling factors for 1024x1024x1024 MM.}
    \vspace{-0.05in}
    \label{fig:tuning_mm_cmp_skew}
\end{figure}

\paragraph{Lesson 2: A high-performance design needs to balance computation and communication.} We show that the latency-optimal design does not necessarily minimize the off-chip data communication and both computation and communication need to be considered. Several previous works~\cite{chen_communication,marvel} that used the off-chip communication to prune the design space may leave out the optimal designs, leading to inferior performance.

\paragraph{Lesson 3: Iterative search methods can be enhanced with mathematical programming solvers.} 
Iterative methods achieve high sample efficiency and are portable to different search problems. 
It can be further enhanced with mathematical programming solvers.
We have demonstrated that a MP-based optimizer which uses a simplified objective function that considers data communication and computation resource can effectively boost the convergence of evolutionary search. 
Although this work only focuses on systolic arrays, the methodology is general, and designers can apply it to performance tuning tasks for other applications and hardware architectures.

\paragraph{Lesson 4: Compiler-assisted design space construction improves productivity and performance.} To the best of our knowledge, Odyssey is the first work that constructs the design space of systolic arrays with an open-source compiler that generates real hardware designs. This approach guarantees the validity and accuracy of search results and improves the productivity. In the recent years, we have seen an emergence of domain-specific language compilers for DSAs~\cite{spatial,genesis,gemmini,fleet}. We believe the methodology in this work could provide valuable insights to further enhance the performance tuning of these works and enable a wider adoption of DSAs in the post Moore's law era.





\section{Limitations}
\label{sec:limit}
The current framework limits the input programs to ones with rectangular iteration domains. For programs with non-rectangular iteration domains, such as LU decomposition, programmers need to implement the design performance models in Python manually and supply it to the auto-tuner. We will address this limitation in the future work.

\section{Conclusion}
\label{sec:conclude}

This paper presents Odyssey, an automatic design space exploration framework for systolic arrays. Odyssey covers a comprehensive and accurate design space of systolic arrays and incorporates a hybrid search method consisting of the MP-based optimizer and evolutionary search. Furthermore, it implements an effective loop permutation pruning algorithm that helps reduce the design space without leaving out the optimal designs. 
With Odyssey, we unveil the limitations of several commonly used assumptions by previous works in the performance optimization of systolic arrays. 
We have also conducted an architecture analysis of systolic arrays on two applications, MM and CNN. Although this work focuses on the systolic array architecture, the methodology and insights can be applied to other hardware and architectures as well. Future work includes extending the Odyssey framework to support applications with non-rectangular iteration domains.




\begin{acks}
\small
We would like to thank Marci Baun for helping edit the paper. This work is partially supported by the Intel and NSF joint research center for Computer Assisted Programming for Heterogeneous Architectures (CAPA), NSF NeuroNex Award DBI-1707408, and the members from the CDSC Industrial Partnership Program. We acknowledge the valuable support of the Xilinx Adaptive Compute Clusters (XACC) Program.
\end{acks}


\bibliographystyle{plain}
\bibliography{references}

\end{document}